\documentclass[11pt]{article}
\usepackage[margin=1in]{geometry}
\usepackage{amsfonts, amsmath, amsthm, amssymb, mathtools}
\usepackage[round]{natbib}

\usepackage{algorithmic}
\usepackage{algorithm}

\usepackage{xcolor}
\usepackage{booktabs}
\usepackage{enumerate}
\usepackage{graphicx,caption}
\usepackage{multirow}
\usepackage{mathtools}
\usepackage{amssymb}
\mathtoolsset{showonlyrefs=true}

\newcommand{\E}{E}
\newcommand{\R}{\mathbb{R}}
\renewcommand{\P}{\mathbb{P}}
\DeclareMathOperator{\pr}{pr}
\DeclareMathOperator{\var}{var}
\DeclareMathOperator{\eff}{eff}
\DeclareMathOperator*{\argmin}{argmin}

\newcommand{\revision}[1]{#1}

\newtheorem{theorem}{Theorem}
\newtheorem{lemma}{Lemma}

\theoremstyle{definition}
\newtheorem{assumption}{Assumption}

\begin{document}

\markboth{S. Khan et~al.}{Doubly-robust and heteroscedasticity-aware sample trimming}

\title{Doubly-robust and heteroscedasticity-aware sample trimming for causal inference}
\author{%
	Samir Khan \\
  Stanford University\\
  \texttt{samirk@stanford.edu} \\
  \and
   Johan Ugander \\
   Stanford University \\
   \texttt{jugander@stanford.edu} \\
}

\maketitle

\begin{abstract}
	A popular method for variance reduction in causal inference is propensity-based trimming, the practice of removing units with extreme propensities from the sample. This practice has theoretical grounding when the data are homoscedastic and the propensity model is parametric \citep{yang2018, crump}, but in modern settings where heteroscedastic data are analyzed with non-parametric models, existing theory fails to support current practice. In this work, we address this challenge by developing new methods and theory for sample trimming. Our contributions are three-fold: first, we describe novel procedures for selecting which units to trim. Our procedures differ from previous works in that we trim not only units with small propensities, but also units with extreme conditional variances. Second, we give new theoretical guarantees for inference after trimming. In particular, we show how to perform inference on the trimmed subpopulation without requiring that our regressions converge at parametric rates. Instead, we make only fourth-root rate assumptions like those in the double machine learning literature. This result applies to conventional propensity-based trimming as well and thus may be of independent interest. Finally, we propose a bootstrap-based method for constructing simultaneously valid confidence intervals for multiple trimmed sub-populations, which are valuable for navigating the trade-off between sample size and variance reduction inherent in trimming. We validate our methods in simulation, on the 2007-2008 National Health and Nutrition Examination Survey, and on a semi-synthetic Medicare dataset and find promising results in all settings. 
\end{abstract}

\section{Introduction}
\label{sec:intro}

Traditional methods for estimating causal effects from observational data typically rely on two standard assumptions: unconfoundedness and overlap \citep{rosenbaum1984}. In practice, observational data often have limited overlap, especially in high-dimensional settings \citep{d2021}, and this leads to extreme propensity scores and high-variance estimates of the treatment effect. A large body of literature addresses this challenge by modifying the estimand to either exclude or down-weight units with extreme propensity scores, and these methods have been widely adopted in practice \citep{yang2018, li2018, crump}. However, modern data can pose an additional challenge in the form of heavy-tails and heteroscedasticity \citep{burke2019, amazon_rct}. 

In this paper, we address this challenge by exploring sample trimming methods that reduce variance by trimming not only units with extreme propensities, but also units with extreme conditional variances. In order to provide valid statistical inferences when using these trimming methods, we also develop new methods for inference after sample trimming that offer greater flexibility and are valid in a wider range of settings than previous such methods. 

\paragraph{Motivation and interpretation} To motivate this approach, consider a single unit $(X, Y, Z)$ drawn from a super-population distribution, where $X$ is a covariate vector, $Y$ is a response, $Z$ is a treatment indicator, and $e(X)$ is the probability of treatment. An inverse-propensity weighted estimator for $\E[Y]$ is $YZ/e(X)$, which is unbiased, but is well-known to suffer from extremely high-variance when $e(X)$ takes small values \citep{hajek, khan2021}. As such, the goal of existing sample trimming methods is to preclude this possibility by removing units for which $e(X)$ takes small values, executing a change of estimand to make what is essentially a bias--variance trade-off. On the other hand, the inverse-propensity weighted estimate $YZ/e(X)$ will also have high-variance if $\var(Y\mid X)$ is large, an issue which is not addressed by existing methods, but may be a major obstacle when $\var(Y\mid X)$ is extremely large for some values of $X$. Put simply: if we do not believe that we can accurately estimate treatment effects on units with propensities of, say, 0.01, then we must also acknowledge that we cannot accurately estimate treatment effects on units with conditional variances of say, 100, and so we propose to trim these latter units as well.


An important difference between this proposal and existing propensity-based trimming methods is that the sub-population found by propensity-based methods can be interpreted as a population that is likely to receive either treatment or control (sometimes called an equipoise population), and thus may be a natural population of interest. This interpretation does not extend to variance-based trimming methods\textemdash instead, variance-based methods can be interpreted as identifying a small population of outliers in the data, whose behavior and response to treatment is very different from that of other units, and trimming these units to focus on an ``inlier'' population on which treatment effects can be estimated more accurately. In many cases, this inlier population is also of natural interest, since treatment effects on the full population may be dominated by the outliers, and the treatment effect on the inlier population may be more representative of how treatment will affect the majority of units. We demonstrate this phenomenon, along with further interpretive issues, as part of a data example in Section~\ref{subsec:acic}. 

\revision{In general, the question of whether or not a particular subpopulation is of interest to an analyst is dependent on both domain considerations and the level of precision with which treatment effects for that subpopulation can be estimated. The problem of selecting a subpopulation of interest from a set of candidates is fundamental to the trimming literature, and not unique to our work\textemdash even when propensity trimming alone, the choice of propensity cut-off induces a similar family of sub-populations, and choosing between those sub-populations requires a similar balancing of variance and relevance. Our methods can be understood as more effectively navigating this trade-off between variance and relevance than existing methods, thus offering practitioners a better set of sub-populations to choose between.}

\paragraph{Inference after trimming}
After applying any sample trimming procedure, another challenge immediately arises: how to perform valid inference on the trimmed sub-population. Thus our second contribution in the present work is to provide new theoretical results on inference after sample trimming. Existing work typically makes strong rate assumptions or parametric assumptions on the estimation of nuisance components \citep{crump, yang2018}, and we extend this work by using doubly-robust estimators to show how valid inference can be performed under weaker conditions on the estimation of nuisance components. The application of doubly-robust estimators to this setting requires a subtle choice of estimand as well as careful handling of cross-fitting, both of which we address. 

These results apply both to our variance-based trimming and to classical propensity-based trimming methods, thus connecting the recent literature on double machine learning and cross-fitting with the long-standing practice of sample trimming.

Our third contribution addresses a more subtle, previously unconsidered, aspect of inference after trimming. Roughly speaking, there are several features of a sample trimming method we may be interested in: the amount of variance reduction offered by the trimming, the size of the resulting sub-population, the point estimate on that sub-population, and perhaps even covariate distributions with the sub-population. However, there is no way to smoothly navigate the trade-offs between these considerations. If we trim the sample one way, perform inference, and find the results unfavorable for some reason, we cannot then trim the sample another way and perform valid inference without conditioning on the results of the first sample trimming; this is the problem of selective inference \citep{taylor2015}. As a remedy, we introduce a bootstrap-based method that allows an analyst to pre-commit to a small number of trimming methods, and then constructs simultaneously-valid confidence intervals for the sub-populations found by each trimming method. An analyst can then choose freely between the different sub-populations based on problem specific considerations while retaining statistical validity. \revision{One drawback of our methods is that, in simulations, we require relatively large sample sizes to obtain the target coverage level, meaning that analysts should be more cautious of results in small sample sizes.}

\paragraph{Response-based trimming}

One potential objection to our approach is that our trimming methods will use the responses when modeling conditional variances, and thus our trimming procedures are response-dependent. This raises two concerns, one statistical and one philosophical. From a statistical perspective, one may be concerned that this compromises the validity of the analysis, but we show in Section~\ref{sec:inference} that under appropriate assumptions on the fitting of the conditional variance, our inferences remain valid despite the fact that we have used the response when trimming. From a philosophical perspective, units with extreme responses may be the units most in need of treatment, and should not be trimmed. However, if these extreme units are actually the ones of most interest, then a measure like the average treatment effect is perhaps not even appropriate, since it will also account for the effect of the treatment on all other units as well. Nonetheless, because our methods provide simultaneously valid confidence intervals across multiple sub-populations, we still provide a point estimate and confidence interval for the average treatment effect on the full population, including any potential units of special importance, when sample trimming. 

To summarize, our work both proposes a new criteria for sample trimming based on conditional variances and propensity scores rather than on propensity scores alone, and develops new theoretical tools for inference after sample trimming. We validate all of our methods with experiments on synthetic, semi-synthetic, and real data and find that our new trimming methods reduce variance beyond what propensity-based methods alone can achieve, identify interesting sub-populations of the full sample by removing possible outliers, and lead to statistically significant conclusions on some of these sub-populations even when no such conclusion was possible on the full population.

\subsection{Related work}

Our work directly builds on the extensive sample trimming literature, and especially on \cite{crump} \revision{(which is itself a journal version of \citet{crump2006})} and \cite{yang2018}. We offer a more detailed comparison with these works in Section~\ref{sec:inference}, but at a high-level, we differ from these previous works in \revision{our more complete treatment of} heteroscedasticity and in assuming weaker conditions on the modeling of nuisance components. \revision{For example, Theorem 1 of \cite{crump} calculates an optimal trimming set in the heteroscedastic case, but then quickly specializes to the homoscedastic setting in Corollary 1, and so the main methodological work is under the homoscedasticity assumption. In contrast, we provide a full methodological toolbox for tackling heteroscedasticity, including allowing for complex nonparametric estimates of conditional variances, and present simultaneous inference methods that can be used to compare subpopulations.} 

One prior work with a similar idea to ours is \cite{chaudhuri2014}, which proposes to remove units whose contribution to the inverse-propensity weighted estimator is extremely large, which also amounts to removing units with extreme response values. However, \cite{chaudhuri2014} are considering a largely different problem than us: they are not concerned with variance minimization, consider only classical inverse-propensity weighted estimators, and do not modify the estimand as is done in the sample trimming literature. 

Our current proposal is also conceptually related to methods in robust statistics and outlier removal. For example, removing units with large residuals from an ordinary least-squares analysis is similar in spirit to the methods we propose here, as are other methods that identify and remove extreme units from the data such as \citet{rohatgi2021}. We differ from these methods in that our motivation for dropping units is based on variance reduction, not on a contamination model for the data, and in that we emphasize the problem of inference after dropping these units.



\section{Model and notation}
\label{sec:model}

We adopt a potential outcomes framework with $n$ units where the tuples $(Y_i(1), Y_i(0), X_i, Z_i)$ are i.i.d.~from a super-population distribution $\P$ over $\mathbb{R}^2\times \mathcal{X}\times \{0,1\}$. We assume that $Y_i(1)$ and $Y_i(0)$ both have finite variance and that we observe $Y_i=Z_iY_i(1)+(1-Z_i)Y_i(0)$. We write $e(x)=\pr(Z_i=1\mid X_i=x)$ for the propensity score, 
$\mu_w(x)=\E[Y_i(w)\mid X_i=x]$, where $w\in\{0,1\}$, for the conditional means, 
and $\sigma^2_w(x)=\var(Y_i(w)\mid X_i=x)$ for the conditional variances. 
We make the standard unconfoundedness and overlap assumptions that $Z_i\perp (Y_i(1), Y_i(0))\mid X_i$ and $\eta\leq e(x)\leq 1-\eta$ \citep{rosenbaum1984}.

Our target of inference is the sample average treatment effect (SATE) and its trimmed analogs, 
\begin{equation}
	\tau=\frac{1}{n}\sum_{i=1}^n \tau(X_i),\quad \tau_{A} =\frac{1}{n_A}\sum_{i=1}^n \tau(X_i)\mathbf{1}\{X_i \in A\},
	\label{eq:target}
\end{equation}
where $\tau(x)=\mu_1(x)-\mu_0(x)$ is a conditional average treatment effect (CATE), $A \subseteq \mathcal X$ is the subset of covariate space we are restricting the covariates to, and $n_A=\sum_{i=1}^n \mathbf{1}\{X_i \in A\}$ is the number of sample units whose covariates lie in $A$. 

In subsequent sections, we employ empirical process notation~\citep{wellner2013, kennedy2016}. We let $W_i = (X_i, Y_i, Z_i)$ be the entire triplet we observe for unit $i$, and we write $\P_nf=\frac{1}{n}\sum_i f(W_i)$ and $\P f= \int f(w)d\P(w)$. Note that for a random function $\hat{f}$, $\P \hat{f}$ is a random variable, since we do not integrate over the randomness in $\hat{f}$. In contrast, $\E[\hat{f}]$ is a deterministic quantity that integrates out the randomness in a new sample and in $\hat{f}$. We also define the norm $\| f\|_{L^q(\P)}=(\P |f|^q)^{1/q}$.


\section{Trimming methods}
\label{sec:methods}

In this section, we present a framework for sample trimming methods and use this framework to propose a trimming method that accounts for conditional variances. As a starting point, recall the result of \cite{hirano2003} that the variance of an efficient estimator \revision{(such as the AIPW estimator)} of $\tau_A$ is given by 
\begin{equation}
	V^{\eff}_{A}=\frac{1}{\pr(X\in A)^2}\E\left[ \mathbf{1}\{X\in A\}\left( \frac{\sigma_1^2(X)}{e(X)}+\frac{\sigma_0^2(X)}{1-e(X)} \right) \right].
\label{eq:eff_var_omega}
\end{equation}
Based on \eqref{eq:eff_var_omega}, we can extract the key quantity that determines a unit's contribution to the asymptotic variance, calling it $k(x)$: 
\begin{equation}
	k(x)=\frac{\sigma_1^2(x)}{e(x)}+\frac{\sigma_0^2(x)}{1-e(x)}.
	\label{eq:k}
\end{equation}
That is, if many units have large values of $k(X_i)$, then the variance of our estimate of $\tau_A$ will be large, and vice-versa. This idea was made precise by \cite{crump}, who showed that, \revision{if $\sigma_0^2(x)$ and $\sigma_1^2(x)$ are bounded}, \eqref{eq:eff_var_omega} is minimized for the set $A$ that thresholds $k(x)$ at a cut-off $\gamma$, that is, 
\begin{equation}
	\argmin_A V_A^{\eff}=\{x: k(x)\leq \gamma\},
	\label{eq:crump_thm}
\end{equation}
for some cut-off $\gamma\in \R$. This result motivates us to consider trimming sets $A$ that have this form, i.e.,~that threshold the function $k(x)$. 

Of course, in practice, we do not have direct access to the function $k$ or the choice of $\gamma$ for which the minimum in \eqref{eq:crump_thm} is attained. Instead, both must be learned from the data, giving us an estimated function $\hat{k}(x)$, an estimated cut-off $\hat{\gamma}$, and a corresponding trimming set $\hat{A}=\{x: \hat{k}(x)\leq \hat{\gamma}\}$. The difference between $\hat{A}$ and $A$ is subtle, but will play a crucial role in what follows, particularly in our discussion of inferential issues in Section~\ref{sec:inference}. We now discuss several choices for $\hat{k}$ and $\hat{\gamma}$.

\subsection{Choices of $\hat{k}$}

How we estimate $k(x)$ depends on what assumptions we are willing to make on $\sigma_1^2(x)$ and $\sigma_0^2(x)$. In particular, we distinguish between two possibilities: 
\begin{description}
	\item[Homoscedasticity assumed:] if we assume that $\sigma_1^2(x), \sigma_0^2(x)$ are constant in $x$ and equal to each other, then we have that $k(x)\propto 1/(e(x)(1-e(x))$, and so we can estimate $k$ by first estimating the propensity score by $\hat{e}(x)$, and then setting $\hat{k}(x)=1/(\hat{e}(x)(1-\hat{e}(x))$, Note that thresholding this choice of $\hat{k}$ is equivalent to thresholding on $\hat{e}(x)$ itself, and so recovers standard propensity trimming \citep{crump}.

			\item[Heteroscedasticity allowed:] if we are not willing to make the homoscedasticity assumption, then we must also estimate the conditional variances by $\hat{\sigma}_1^2(x), \hat{\sigma}_0^2(x)$, and then use the estimate 
				\begin{equation}
					\hat{k}(x)=\hat{\sigma}_1^2(x)/\hat{e}(x)+\hat{\sigma}_0^2(x)/(1-\hat{e}(x)).
					\label{eq:k_hat}
				\end{equation}
\end{description}

Thus, the usual propensity-based trimming corresponds to choosing $\hat{k}$ based on a homoscedasticity assumption that may or may not be satisfied. In some cases, such as when $Y_i$ is binary and so $\sigma^2_w(x)$ is bounded by $1/4$ for all $x$, deviations from this assumption may be negligible. However, in other cases, such as when $Y_i$ is real-valued and has potentially unbounded variance, deviations from this assumption may be significant and worth capturing. In such situations, we propose instead trimming based on the ``heteroscedasticity-aware'' $\hat{k}$ defined in \eqref{eq:k_hat}. This is in contrast to propensity-based trimming, which we refer to as ``homoscedastic trimming'' in light of the underlying homoscedasticity assumption. Going forward, we state all of our results for general $\hat{k}$, making them relevant to both existing (homoscedastic, propensity-based) procedures and our new procedures.

\subsection{Choices of $\hat{\gamma}$}

Next we consider the estimation of $\gamma$. There is a unique choice of $\gamma$ that achieves the minimum in \eqref{eq:crump_thm}, but we may or may not want to target this choice of $\gamma$ depending on how we prioritize other considerations such as simplicity and sub-population size (which can aid in interpretability). Thus we consider three possible choices of $\hat{\gamma}$: 

\begin{description}
	\item[Constant:] The simplest choice is to pre-commit to a specific value of $\hat{\gamma}$; for example, \cite{crump} suggest using $\hat{\gamma}=1/0.1+1/0.9\approx 11.1$ for propensity-based trimming.
	\item[Variance minimizing:] If our goal is to actually attain the minimum in \eqref{eq:crump_thm}, then we should choose $\gamma$  to minimize a sample estimate of $V_{A}^{\text{eff}}$ for $A=\{x: \hat{k}(x)\leq \gamma\}$, 
		\begin{equation}
			\hat{\gamma}=\argmin_{\min_i\hat{k}(X_i)\leq \gamma\leq \max_i\hat{k}(X_i)} \frac{\frac{1}{n}\sum_i \hat{k}(X_i)\mathbf{1}\{\hat{k}(X_i)\leq \gamma\}}{\left( \frac{1}{n}\sum_i \mathbf{1}\{\hat{k}(X_i)\leq \gamma\} \right)^2}.
			\label{eq:gamma_objective}
		\end{equation}
	
	\item[Fixed-fraction:] Finally, if our goal is to trim as little of the sample as possible, so that the remaining study population is as close/relevant as possible to the full population, we may take
		\begin{equation}
			\hat{\gamma}=(1-\delta) \text{ quantile of }\hat{k}(X_1),\ldots, \hat{k}(X_n),
				\label{eq:quantile}
		\end{equation}
		so that we only trim a $\delta$-fraction of the data, for some (presumably modest) constant $\delta$.
\end{description}
Each proposed choice of $\hat{k}$ and $\hat{\gamma}$ represents a particular sample trimming method with different properties. For example, homoscedastic trimming with a constant threshold is very common in practice, but may trim a large fraction of the sample. On the other hand, heteroscedastically trimming a fixed-fraction of the sample lends itself to the outcome-outlier removal interpretation discussed in Section~\ref{sec:intro}, and may substantially reduce variance while trimming only a small number of units. Particular choices will depend on problem-specific considerations, but in this work, we especially highlight the value of heteroscedasticity-aware trimming of a fixed-fraction of units, and focus mainly on this choice in our experiments. 

\section{Inference after trimming}
\label{sec:inference}

After we have chosen a trimming method (i.e., after we have estimated a function $\hat{k}$ and a threshold $\hat{\gamma}$ as in the previous section), we obtain a subset $\hat{A}$ of $\mathcal{X}$ and would like to perform inference around $\tau_{\hat{A}}$. Actually performing such inference turns out to be a fairly subtle task. Although it is not uncommon to simply ignore the trimming step and proceed with standard inference, the resulting confidence intervals are not guaranteed to achieve the desired coverage. On the other hand, existing methods that do account for trimming typically require strong assumptions on the modeling process: for example \cite{crump} require that the outcome regression and propensity both be estimated at least an $o(n^{-1/3})$ rate, while \cite{yang2018} require that the propensity model be parametric and the trimming cut-off $\hat{\gamma}$ be constant. In this paper, we go beyond this prior work by developing a method for valid inference after trimming even when $e(x), \mu_w(x)$, and $\sigma_w^2(x)$ are estimated at the slower $o(n^{-1/4})$ rate. By doing so, we connect the literature on causal inference with machine learning methods to the literature on trimming, allowing practitioners to use doubly-robust estimators on a trimmed sample, and still produce asymptotically valid confidence intervals.

The key to our approach is a careful choice of estimand. In fact, there are two natural estimands around which we might want to perform inference\textemdash the distinction between these has been briefly noted in the literature \citep{crump, yang2018}, but we hope to provide a more detailed discussion. The first possible estimand is the treatment effect on the sample trimming set $\hat{A}$, that is, 
\begin{equation}
	\tau_{\hat{A}}=\frac{1}{n_{\hat{A}}}\sum_{i=1}^n \tau(X_i) \mathbf{1}\{\hat{k}(X_i)\leq\hat{\gamma}\},\quad n_{\hat{A}}=\sum_i \mathbf{1}\{\hat{k}(X_i)\leq\hat{\gamma}\},
	\label{eq:smoothed_estimand}
\end{equation}
where $n_{\hat{A}}$ is the trimmed sample size. However, with this estimand, the specific sub-population on which we are performing inference is dependent on the sample, because $\hat{A}$ is a function of the sample. To obtain a sub-population that is meaningful independent of the realized sample, we suppose that $\hat{k}$ and $\hat{\gamma}$ converge to $\bar{k}$ and $\bar{\gamma}$ respectively (see Assumption~\ref{ass:trim} for a precise statement), and define the limiting sub-population $\bar{A}=\{x: \bar{k}(x)\leq \bar{\gamma}\}$. The estimand corresponding to $\bar{A}$ is 
\begin{equation}
	\tau_{\bar{A}}=\frac{1}{n_{\bar{A}}}\sum_{i=1}^n \tau(X_i) \mathbf{1}\{\bar{k}(X_i)\leq\bar{\gamma}\},\quad n_{\bar{A}}=\sum_{i=1}^n \mathbf{1}\{\bar{k}(X_i)\leq\bar{\gamma}\}.
	\label{eq:limit_smoothed_estimand}
\end{equation}

To better understand the difference between \eqref{eq:smoothed_estimand} and \eqref{eq:limit_smoothed_estimand}, it is helpful to consider the special case of homoscedastic trimming with a fixed value of $\hat{\gamma}$. Then, $\tau_{\hat{A}}$ corresponds to the sub-population trimmed by the estimated propensity $\hat{e}(x)$, while, assuming that $\hat{e}(x)$ is consistent for the true propensity, $\tau_{\bar{A}}$ corresponds to the sub-population trimmed by the true propensity $e(x)$. This special case also highlights the pros and cons of each estimand: the advantage of $\tau_{\hat{A}}$ is that we know exactly which units in our sample are part of the target trimming set $\hat{A}$, while the advantage of $\tau_{\bar{A}}$ is that we can interpret it in terms of the true propensity.

Another important difference between these two estimands is that, when using an augmented inverse propensity weighted estimator \citep{kang2007, robins1994}, performing inference around $\tau_{\hat{A}}$ requires much weaker conditions than performing inference around $\tau_{\bar{A}}$: inference at the usual $n^{1/2}$-rate around $\tau_{\hat{A}}$ requires only a consistency assumption on $\hat{k}$, while inference around $\tau_{\bar{A}}$ requires stronger assumptions on the asymptotics of $\hat{k}$ as well as potentially replacing indicator functions with smooth approximations as in \citet{yang2018} (we discuss the need for this smoothing in more detail following Theorem~\ref{thm:limit_trimmed_clt}). Intuitively, when estimating $\tau_{\hat{A}}$, we do not need to control error terms of the form $\hat{k}(x)-k(x)$ since both the estimator and estimand depend on $\hat{k}$, while we do need to control such terms when estimating $\tau_{\bar{A}}$. This subtle distinction is easy to overlook, but is of central importance to correctly using and interpreting sample trimming methods.

\subsection{Asymptotic linearity of the AIPW estimator}

We now present our main result: that an augmented inverse propensity weighted estimator of $\tau_{\hat{A}}$ is first-order equivalent to a sum of independent and identically distributed random variables for any consistent $\hat{k}$, allowing for inference based on normal theory or the bootstrap. Formally, suppose that we estimate the nuisance parameters $\mu_{w}$ and $e$ by $\hat{\mu}_{w}$ and $\hat{e}$ so that the following assumption is satisfied:
\begin{assumption}
	\label{ass:dml}
	We assume that $\hat{\mu}_{w}$ and $\hat{e}$ are consistent for $\mu_{w}$ and $e$ in the $L^2(\P)$ norm and satisfy the rate condition 
	\begin{equation}
		\| (\hat{\mu}_{w}(X_i)-\mu_{w}(X_i))\mathbf{1}\{\hat{k}(X_i)\leq \hat{\gamma}\}\|_{L^2(\P)}\cdot \|(\hat{e}(X_i)-e(X_i))\mathbf{1}\{\hat{k}(X_i)\leq \hat{\gamma}\}\|_{L^2(\P)}=o_P(n^{-1/2}).
	\label{eq:rate_product}
	\end{equation}
\end{assumption}
In Assumption~\ref{ass:dml}, we only require convergence rates on the subset of non-trimmed units. Thus, even if it is quite difficult to estimate the mean and propensity in some extreme parts of the covariate space, this will not be a problem as long as we trim those parts of the covariate space off. 

Next, we make the following assumption on $\hat{k}$ and $\hat{\gamma}$.
\begin{assumption}
	\label{ass:trim}
	We assume $\hat{k}$ and $\hat{\gamma}$ are convergent, so that $\|\hat{k}-\bar{k}\|_{L^{\infty}(\P)}=o_P(1)$ and $\hat{\gamma}\xrightarrow{\P} {\gamma}$ for some function $\bar{k}:\mathcal{X}\to \R$ and $\bar{\gamma}\in \R$, and that $\hat{k}$ and $\bar{k}$ have densities that are bounded by a constant $B$.
\end{assumption}
Note that we do not assume well-specification of $\hat{k}$, i.e., that $\hat{k}$ as in \eqref{eq:k} actually converges to $\sigma_1^2(x)/e(x)+\sigma_0^2(x)/(1-e(x))$. If this is not true, then we are no longer estimating the optimal trimming, but we are still estimating a well-defined estimand. \revision{However, interpretation of the estimand becomes more challenging in this case, since it is not clear what large values of $\hat{k}$ reflect. Note also that if the conditional variances $\sigma^2_{w}$ are infinite, then Assumption~\ref{ass:trim} will not be satisfied, and so caution should be used in settings where infinite variance is a concern.} 

Finally, we make the following assumption on the estimation of $\hat{\mu}_{w}$, $\hat{e}$, and $\hat{k}$.

\begin{assumption}
	\label{ass:cross}
	We assume that either (a) $\hat{\mu}_{w}$, $\hat{e}$, and $\hat{k}$ are $K$-fold cross-fitted, $K>1$ or that (b) $\hat{\mu}_{w}$ and $\hat{e}$ are restricted to lie in a Donsker class and $\hat{k}$ is restricted to lie in a VC-subgraph class. 
\end{assumption}

\revision{Assumption~\ref{ass:cross} warrants further discussion. In alternative (a), we mean by cross-fitting that the data are split into $K$ equal groups, say $I_1,\cdots, I_K$. Then, for the propensity model for example, for each of $1\leq k \leq K$ a model $\hat{e}^{(-k)}$ is fit on the data from folds $I_1,\cdots, I_{k-1}, I_{k+1},\cdots, I_K$. Finally, the cross-fit model $\hat{e}^{\text{cf}}$ is defined as 
\begin{equation}
	\hat{e}^{\text{cf}}(X_i)=\sum_{k=1}^K \mathbf{1}\{i\in I_k\} \hat{e}^{(-k)}(X_i).
\end{equation}
That is, predictions are made for each point in the sample using the model that did not see that point during training. The cross-fitted version of $\hat{\mu}_{w}$ and $\hat{k}$ are defined similarly. 

In alternative (b), we instead restrict the complexity of the fitted functions $\hat{\mu}_{w}$, $\hat{e}$, and $\hat{k}$. Both of these are standard conditions in the analysis of the doubly-robust estimator \citep{chernozhukov2018, kennedy2016, andrews1994}, although the condition that $\hat{k}$ lie in a VC-subgraph class is stronger than typical conditions; we require this stronger condition to handle the fact that $\hat{k}$ is wrapped in indicator functions for our purposes. 

We present both alternatives in Assumption~\ref{ass:cross} since either is sufficient for the theoretical results of this section, but it is natural to ask which should be preferred in practice\textemdash that is, should we use cross-fitting or not? Based on our experiments, we recommend using a mixture of the two assumptions. In particular, we recommend a procedure in which which $\hat{k}$ is cross-fitted (i.e.~it is constructed using cross-fitted estimates of $e$ and $\sigma_{w}^2$) when selecting which units to trim, but the actual estimation and inference (i.e.~the construction of the AIPW scores in \eqref{eq:aipw_score}) is done using cross-fitted estimates of $e$ and $\mu_{w}$. The intuition behind this recommendation is that cross-fitting $\hat{k}$ complicates interpretation of the estimand, and not cross-fitting $\hat{k}$ ensures that units with extreme values of $\hat{k}$ are correctly identified. Meanwhile, cross-fitting $e$ and $\mu_{w}$ in the estimation step improves the finite-sample coverage properties of our confidence intervals, as it typically does for doubly-robust estimators. Thus, by not cross-fitting $\hat{k}$ but still cross fitting $\hat{e}$ and $\hat{\mu}_w$, we obtain valid confidence intervals around an interpretable estimand.}


With this preparation, we have the following result, whose proof appears in Appendix~\ref{sec:proofs}.
\begin{theorem}
	\label{thm:trimmed_clt}
	Assume that $\hat{\mu}_{w}$, $\hat{e}$, $\hat{k}$, and $\hat{\gamma}$ satisfy Assumptions~\ref{ass:dml}-\ref{ass:cross}. Consider the doubly-robust scores 
	\begin{equation}
		\hat{\tau}(X_i)=\hat{\mu}_1(X_i)+\frac{Y_i(1)-\hat{\mu}_1(X_i)}{\hat{e}(X_i)}Z_i-\hat{\mu}_0(X_i)-\frac{Y_i(0)-\hat{\mu}_0(X_i)}{1-\hat{e}(X_i)}(1-Z_i),
		\label{eq:aipw_score}
	\end{equation}
	and define the doubly-robust estimator 
	\begin{equation}
		\hat{\tau}_{\hat{A}}=\frac{1}{n_{\hat{A}}}\sum_{i=1}^n \hat{\tau}(X_i)\mathbf{1}\{\hat{k}(X_i)\leq\hat{\gamma}\}.
		\label{eq:tau_hat}
	\end{equation}
	Then, we have the asymptotically linear expansion
	\begin{equation}
		\hat{\tau}_{\hat{A}}-\tau_{\hat{A}}=\frac{1}{\pr(\bar{k}(X_i)\leq\bar{\gamma})}\cdot \frac{1}{n}\sum_{i=1}^n \left(\frac{Y_i(1)-\mu_1(X_i)}{e(X_i)}Z_i+\frac{Y_i(0)-\mu_0(X_i)}{1-e(X_i)}(1-Z_i)\right)\mathbf{1}\{\bar{k}(X_i)\leq\bar{\gamma}\}+o_P(n^{-1/2}).
		\label{eq:trimmed_clt}
	\end{equation}
\end{theorem}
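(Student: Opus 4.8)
The plan is to write $\hat{\tau}_{\hat{A}}-\tau_{\hat{A}}=\frac{n}{n_{\hat{A}}}\cdot\frac{1}{n}\sum_{i}\big(\hat{\tau}(X_i)-\tau(X_i)\big)\mathbf{1}\{\hat{k}(X_i)\leq\hat{\gamma}\}$, so that the normalizing factor $n/n_{\hat{A}}=1/\P_n\mathbf{1}\{\hat{k}\leq\hat{\gamma}\}$ separates cleanly from the numerator. Since $\|\hat{k}-\bar{k}\|_{L^2(\P)}=o_P(1)$, $\hat{\gamma}\xrightarrow{\P}\bar{\gamma}$, and $\hat{k}$ has density bounded by $B$ (Assumption~\ref{ass:trim}), the disagreement probability $\P(\hat{w}\neq\bar{w})$, where $\hat{w}_i=\mathbf{1}\{\hat{k}(X_i)\leq\hat{\gamma}\}$ and $\bar{w}_i=\mathbf{1}\{\bar{k}(X_i)\leq\bar{\gamma}\}$, tends to zero: whenever $|\hat{k}-\bar{k}|\leq\epsilon$ and $|\hat{\gamma}-\bar{\gamma}|\leq\epsilon$ (events of probability tending to one), a disagreement forces $\hat{k}(X)$ into an interval of width $O(\epsilon)$ about $\hat{\gamma}$, which carries $\hat{k}$-probability at most $O(B\epsilon)$. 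A uniform law of large numbers then gives $n/n_{\hat{A}}\xrightarrow{\P}1/\P(\bar{k}(X)\leq\bar{\gamma})$, so it suffices to show the numerator equals $\frac{1}{n}\sum_i\big(\tau^*(X_i)-\tau(X_i)\big)\bar{w}_i+o_P(n^{-1/2})$, where $\tau^*$ is the \emph{oracle} score obtained by substituting the true $\mu_{(w)},e$ for $\hat{\mu}_{(w)},\hat{e}$ in \eqref{eq:aipw_score}. Using unconfoundedness one checks $\E[\tau^*(X)\mid X]=\tau(X)$, so $\tau^*(X_i)-\tau(X_i)$ is exactly the summand appearing in \eqref{eq:trimmed_clt}.

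I would then split the numerator as $\frac{1}{n}\sum_i(\hat{\tau}-\tau^*)\hat{w}_i+\frac{1}{n}\sum_i(\tau^*-\tau)\hat{w}_i$ and treat the two pieces separately. The first is the standard double-machine-learning remainder: algebraically $\hat{\tau}-\tau^*$ decomposes into a piece linear in $\hat{\mu}_{(w)}-\mu_{(w)}$ and a piece linear in $1/\hat{e}-1/e$. Taking conditional expectations given $X$, the second piece vanishes by unconfoundedness while the first leaves the product $(\hat{\mu}_{(w)}-\mu_{(w)})\tfrac{\hat{e}-e}{\hat{e}}$, whose trimmed integral is bounded via Cauchy--Schwarz and overlap by $\eta^{-1}\|(\hat{\mu}_{(w)}-\mu_{(w)})\hat{w}\|_{L^2(\P)}\|(\hat{e}-e)\hat{w}\|_{L^2(\P)}=o_P(n^{-1/2})$, which is precisely the product rate \eqref{eq:rate_product} of Assumption~\ref{ass:dml}. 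The centered (empirical-process) part is handled either by cross-fitting --- conditional on the training folds the summands are independent with conditional variance $O_P(\|\hat{\mu}_{(w)}-\mu_{(w)}\|^2_{L^2(\P)}+\|\hat{e}-e\|^2_{L^2(\P)})=o_P(1)$, so Chebyshev yields $o_P(n^{-1/2})$ --- or by stochastic equicontinuity over the Donsker class of Assumption~\ref{ass:cross}(b).

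For the second piece, the key step is to replace $\hat{w}_i$ by $\bar{w}_i$, i.e.\ to show $\frac{1}{n}\sum_i g_i(\hat{w}_i-\bar{w}_i)=o_P(n^{-1/2})$ with $g_i=\tau^*(X_i)-\tau(X_i)$. Conditioning on the training folds, the cross terms vanish because $\E[g_i\mid X_i]=0$, so the conditional variance is $\frac{1}{n^2}\sum_i\E[g_i^2\,\mathbf{1}\{\hat{w}_i\neq\bar{w}_i\}\mid\text{train}]$; since $g$ uses the true nuisances it is independent of the training folds and has finite second moment (by overlap and the finite-variance assumption), so uniform integrability together with $\P(\hat{w}\neq\bar{w})\to0$ makes this $o_P(n^{-1})$, and Chebyshev gives the claim. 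Under Assumption~\ref{ass:cross}(b) the same term is an empirical process indexed by the indicator sets of a VC-subgraph class --- themselves a VC (hence Donsker) class of sets --- and is $o_P(n^{-1/2})$ by equicontinuity since $\P(\hat{w}\neq\bar{w})\to0$. Combining the two pieces with the normalization, and absorbing the lower-order cross terms arising from $1/\P_n\hat{w}-1/\P(\bar{k}\leq\bar{\gamma})=o_P(1)$ times an $O_P(n^{-1/2})$ mean-zero average, yields \eqref{eq:trimmed_clt}. I expect this final replacement step to be the main obstacle: it is the only place where the response-dependence of $\hat{k}$ genuinely interacts with the thresholding, and it is exactly why the stronger VC-subgraph condition on $\hat{k}$ and the bounded-density condition are required; the double-robust remainder, by contrast, is routine once the product rate of Assumption~\ref{ass:dml} is available.
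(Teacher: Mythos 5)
Your proof is correct and follows essentially the same route as the paper's: the same numerator/denominator split with the denominator handled via the bounded density of $\hat{k}$, the same Cauchy--Schwarz bound on the doubly-robust bias term using the product rate of Assumption~\ref{ass:dml}, and the same cross-fitting-conditional-variance (or Donsker/VC equicontinuity) argument for the empirical-process terms. The only cosmetic difference is that you isolate the oracle score $\tau^*$ and treat the indicator-replacement term $\tfrac{1}{n}\sum_i g_i(\hat{w}_i-\bar{w}_i)$ explicitly, whereas the paper folds that step into an $L^2$-convergence lemma for the score functions combined with a citation to a sample-splitting lemma; the underlying estimates are identical.
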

The value of Theorem~\ref{thm:trimmed_clt} is that the right-hand side of \eqref{eq:trimmed_clt} is an i.i.d.~sum. This means, \revision{assuming that $Y_i(w)-\mu_w(X_i)$ has finite variance}, that we have the central limit theorem, 
\begin{equation}
	\sqrt{n}(\hat{\tau}_{\hat{A}}-\tau_{\hat{A}})\xrightarrow{d}N\left( 0, V \right),\quad V = \frac{1}{\pr(\bar{k}(X_i)\leq \bar{\gamma})^2}\E\left[ \mathbf{1}\{\bar{k}(X_i)\leq \bar{\gamma}\}k(X_i) \right],
	\label{eq:var_expression}
\end{equation}
and so if we estimate $V$ using the sample variance of the terms of the sum on the RHS of \eqref{eq:tau_hat}, we can form a confidence interval for $\tau_{\hat{A}}$ using asymptotic normal theory. Similarly, we can bootstrap the terms of the sum on the RHS of \eqref{eq:tau_hat} to obtain bootstrap confidence intervals. 

Our next result is a negative one, and shows that the situation for $\tau_{\bar{A}}$ is less simple. To state it, we let $$\hat{S}_{\hat{A}}=\sum_{i=1}^n \hat{\tau}(X_i)\mathbf{1}\{\hat{k}(X_i)\leq \hat{\gamma}\},\quad S_{\bar{A}}=\sum_{i=1}^n \tau(X_i)\mathbf{1}\{\bar{k}(X_i)\leq \bar{\gamma}\},$$ be the numerators of $\hat{\tau}_{\hat{A}}$ and $\tau_{\bar{A}}$ respectively, so that $\hat{\tau}_{\hat{A}}=\hat{S}_{\hat{A}}/n_{\hat{A}}$ and $\tau_{\bar{A}}=S_{\bar{A}}/n_{\bar{A}}$. The usual approach to establishing a central limit theorem for $\hat{\tau}_{\hat{A}}$ would be to first establish a joint central limit theorem for $\hat{S}_{\hat{A}}$ and $n_{\hat{A}}$, and then use the delta method obtain a central limit theorem for $\hat{\tau}_{\hat{A}}-\tau_{\bar{A}}$. To illustrate the challenges of this approach, the following theorem assumes that such a procedure could be carried out, and shows what the result would be. 

\begin{theorem}
	\label{thm:limit_trimmed_clt}
	Suppose Assumptions~\ref{ass:dml}-\ref{ass:cross} hold, and let $\hat{\tau}(X_i)$ be as in \eqref{eq:aipw_score}. Then, if $n^{-1/2}([\hat{S}_{\hat{A}}\;\; n_{\hat{A}}]-[S_{\bar{A}}\;\; n_{\bar{A}}])$ converges in distribution, we have that 
	\begin{equation}
		\hat{\tau}_{\hat{A}}-\tau_{\bar{A}}=\frac{1}{\pr(\bar{k}(X_i)\leq \bar{\gamma})}\left(\frac{1}{n}\sum_{i=1}^n \left(\frac{Y_i-\mu_1(X_i)}{e(X_i)}Z_i+\frac{Y_i(0)-\mu_0(X_i)}{1-e(X_i)}(1-Z_i)\right)\mathbf{1}\{\bar{k}(X_i)\leq\bar{\gamma}\}+\Delta_n\right),
\label{eq:limit_trimmed_clt}
	\end{equation}
	for 
	\begin{equation}
		\Delta_n=\P\left[ \left(\tau(X_i)-\E[\tau(X_i)\mid \bar{k}(X_i)\leq \bar{\gamma}]\right)\left(\mathbf{1}\{\hat{k}(X_i)\leq \hat{\gamma}\}-\mathbf{1}\{\bar{k}(X_i)\leq \bar{\gamma}\}\right) \right]+o_P(n^{-1/2}).
	\end{equation}
\end{theorem}
The convergence in distribution assumption above essentially amounts to ensuring that the error term in the delta method/Taylor expansion can be controlled; the point of the theorem is that even if this is true, we do not obtain an asymptotically linear expansion without further assumptions. 

To elaborate, the expansion of $\hat{\tau}_{\hat{A}}-\tau_{\bar{A}}$ in Theorem~\ref{thm:limit_trimmed_clt} contains not only an i.i.d.~sum, but also an error term $\Delta_n$. This $\Delta_n$ depends on $\mathbf{1}\{\hat{k}(X_i)\leq \hat{\gamma}\}-\mathbf{1}\{\bar{k}(X_i)\leq \bar{\gamma}\}$, and so the asymptotics of $\Delta_n$ depend on the asymptotics of $\hat{k}$ itself, preventing the use of black-box machine learning algorithms to fit $\hat{k}$. Furthermore, even in a case where the asymptotics of $\hat{k}$ are well-understood (for example, if $k$ is assumed to follow a parametric model and $\hat{k}$ is fit by maximum-likelihood estimation), analyzing $\Delta_n$ is still challenging because $\hat{k}(X_i)$ is wrapped in non-smooth indicators, preventing the use of standard asymptotic tools like the delta method. 

One way to overcome these challenges is developed by \citet{yang2018}, who assume that $\hat{k}$ is fit using a generalized linear model and then replace $\tau_{\hat{A}}$ and $\tau_{\bar{A}}$ with smoothed equivalents. With this smoothing, the usual asymptotic theory is once again applicable, and they then track the contribution of $\Delta_n$ to obtain a central limit theorem for $\hat{\tau}_{\hat{A}}-\tau_{\bar{A}}$ in the case of homoscedsatic trimming with a fixed cut-off. However, we do not develop such procedures here for two reasons. First, they are applicable only for specific choices of $\hat{k}$ and thus do not allow for flexible non-parametric modeling of the propensities and outcome model. Second, in replacing $\tau_{\bar{A}}$ with a smoothed analog, we are effectively performing a second change of estimand on top of the original change of estimand from $\tau$ to $\tau_{\bar{A}}$. This double estimand change complicates the interpretation of any results, and we prefer to avoid it.

In principle, it is possible to construct other estimators for $\tau_{\bar{A}}$ that can be analyzed without making parametric assumptions on $\hat{k}$, e.g., by finding an efficient influence function and constructing a one-step estimator \citep{kennedy2022}, but this approach may require estimating further moments beyond the conditional mean and variance, complicating the problem significantly.

Taken together, Theorem~\ref{thm:trimmed_clt} and Theorem~\ref{thm:limit_trimmed_clt} lead us to recommend interpreting the results of a sample-trimmed analysis in terms of $\hat{\tau}_{\hat{A}}$, since this approach allows for flexible non-parametric modeling, and is thus the approach we take throughout our experiments. 

\section{Simultaneous trimming}
\label{sec:simul}

The results of Section~\ref{sec:inference} allow us to perform inference around a single trimmed sub-population; in this section, we develop methods for studying treatment effects around multiple trimmed sub-populations with simultaneous coverage guarantees. \revision{Such methods often lead to large losses in power, but we argue below and demonstrate empirically in Section~\ref{sec:experiments} that the sample trimming problem has special structure that ameliorates this issue. Thus, the use of simultaneous inference methods is especially appealing in this setting, since it is both useful and, from a power perspective, cheap.}

To illustrate the problem, consider an analyst who decides to use one of the sample trimming methods from Section~\ref{sec:methods}; for example, heteroscedastic trimming with a fixed cut-off $\hat{\gamma}$ as in \eqref{eq:quantile}. Suppose that, after performing inference with the chosen cut-off $\hat{\gamma}$, the analyst finds that the width of the confidence interval for the treatment effect on the sub-population is essentially the same as the width of the confidence interval for the treatment effect on the full population, i.e.,~the trimming had almost no effect. In this case, the analyst would likely want to try another, larger, cut-off $\hat{\gamma}$, and see if the corresponding sub-population has more favorable variance properties.

However, it is now extremely challenging to produce valid confidence intervals for other cut-offs $\hat{\gamma}$ because of the selection bias introduced by the first step of the analysis. Essentially, any analysis conducted after finding the first trimming to be undesirable would have to be carried out conditional on this fact. There is a growing body of literature that develops selective inference methods to address problems of this kind \citep{taylor2015}, but it is somewhat challenging to develop such methods in this context while still allowing for flexible modeling of $\hat{k}$. 

As an alternative, we propose a method that allows an analyst to pre-commit to several different trimming methods in advance, and then produces a confidence intervals for each method such that the confidence intervals are simultaneously valid. That is, if we construct 95\% confidence intervals, we ensure that the probability that \emph{any} interval fails to cover is at most 5\%. We focus our discussion on heteroscedastic trimming with a fixed-fraction cut-off $\hat{\gamma}$, since this is a case where simultaneous intervals need not be much wider than marginal intervals, for reasons discussed below, and also since this trimming method contains a natural hyperparameter $\delta$, the fraction of the sample trimmed, that an analyst may wish to explore. The same procedure could also be used in other ways, for example, to explore homoscedastic trimming with a range of several propensity cut-offs.

The method is based on \citet{beran1988} and an exposition of it also appears as Example 15.4.4 of \citet{lehmann2005}, \revision{with a proof of validity given in Theorem 15.4.6}. The idea is as follows: if we specify a grid of trimming fractions $\{\delta_1,\cdots, \delta_m\}$ in advance, learn the thresholding function $\hat{k}$, and then choose cut-offs $\hat{\gamma}_1,\cdots, \hat{\gamma}_m$ according to \eqref{eq:quantile}, this procedure gives rise to a set of sub-populations $\hat{A}_1,\cdots, \hat{A}_m$. Then, the problem of building confidence intervals that are simultaneously valid for all of these sub-populations is equivalent to building an $\ell^{\infty}$ confidence region for the vector 
$$\left[\begin{array}{cccc} \hat{\tau}_{\hat{A}_1}& \hat{\tau}_{\hat{A}_2}&\cdots& \hat{\tau}_{\hat{A}_m}\end{array}\right].$$
But it follows from Theorem~\ref{thm:trimmed_clt} that this vector is asymptotically normal, and so we can construct such a confidence region using the bootstrap \citep{lehmann2005}. \revision{We bootstrap the Student's $t$-statistic for each parameter, so that the width of our confidence interval is not dominated by the variance of the highest-variance estimate. For more details on the practice of bootstrapping $t$-statistics, we refer the interested reader to Section 12.5 of \citet{efron1994}.} The key step is that, rather than taking the quantile of the individual bootstrap replicates, we take the quantiles of the maximum of the bootstrap replicates across all sub-populations. This procedure is described in detail in Algorithm~\ref{alg:simul_trim}. Note that, in each bootstrap iteration, we do not re-fit the propensity, mean, and conditional variance estimates; this is justified by the arguments in Appendix C.10.1 of \cite{dorn2021}.

\begin{algorithm}[t]
	\caption{Simultaneous trimming}\label{alg:simul_trim}
	\textbf{Input:} data $(X_1, Z_1, Y_1),\cdots, (X_n, Z_n, Y_n)$, trimming fractions $\delta_1,\cdots, \delta_m$, and confidence level $\alpha$\\
\textbf{Output:} simultaneously valid confidence intervals for trimmed sub-populations for $1\leq j\leq m$
\begin{algorithmic}[2]
	\STATE Estimate $e(x)=\pr(Z=1\mid X=x)$ by $\hat{e}$, $\sigma^2_w(x)=\var(Y(w)\mid X=x)$ for $w=0,1$ by $\hat{\sigma}^2_w$ and set $\hat{k}(x)$ as in \eqref{eq:k_hat}
\STATE Set $\hat{\gamma}_j$ to be the $1-\delta_j$ quantile of $\hat{k}(X_1),\cdots, \hat{k}(X_n)$ and let $\hat{A}_j=\{x: \hat{k}(x)\leq \hat{\gamma}_m\}$
\STATE Compute  $\hat{\tau}_{\hat{A}_j}$ as in in Theorem~\ref{thm:trimmed_clt} as well as standard errors $\hat{V}_j$ as in \eqref{eq:var_expression}
\FOR{$b=1,\cdots, B$}
\STATE Draw a bootstrapped dataset $(\tilde{X}_1, \tilde{Z}_1, \tilde{Y}_1),\cdots, (\tilde{X}_n, \tilde{Z}_n, \tilde{Y}_n)$
\STATE Compute point estimates $\hat{\tau}_{\hat{A}_j^b}^b$, and standard errors $\hat{V}_j^b$ on the bootstrapped sample (without refitting $\hat{k}$)
\STATE Compute $$T_b=\max_{1\leq j\leq m}\left|\frac{\hat{\tau}_{\hat{A}_j^b}-\hat{\tau}_{\hat{A}_j}}{\sqrt{\hat{V}^b_j}}\right|$$
\ENDFOR
\STATE Let $q$ be the $1-\alpha$ quantile of $T_1,\cdots, T_B$\\
\RETURN $\hat{\tau}_{\hat{A}_j}\pm q\sqrt{\hat{V}}_j$ for $1\leq j\leq m$
\end{algorithmic}
\end{algorithm}

Of course, the intervals produced by Algorithm~\ref{alg:simul_trim} will be wider than the intervals produced for a single estimand, since this is the price of the simultaneous validity. However, there are two things we can do to ensure they are \revision{not} much wider: choosing a small value of $m$, so that we are not attempting to cover many estimands simultaneously, and choosing values of $\delta_j$ that are close together so that the estimates are strongly positively correlated. 

\revision{The value of the positive correlation is that, in Algorithm 1, we are estimating a quantile of the distribution of the maximum of several approximately standard normal random variables. If these random variables were independent or negatively correlated, the distribution of the maximum would be stochastically larger (since it would be more likely that there is at least one random variable which takes a large value). The positive correlation ensures that if one random variable takes a small value, the others are likely to take small values as well, and so the distribution of the maximum is stochastically smaller. It is also constructive to consider the degenerate case, in which the random variables are all equal to each other and thus perfectly correlated, in which case the desired quantile is the usual 1.96 from a single standard normal. As we deviate further from this degenerate perfect correlation, the estimated quantile correspondingly inflates.}

In our experiments, we find that the grid $\{0, 0.1, 0.2, 0.3\}$ gives reasonable results. This approach is informed by the interpretation of heteroscedsatic trimming as outlier detection, and allows an analyst to determine whether a large amount of variance in the estimator is being introduced by a handful of units. In these cases, our method produces much lower variance estimates on large sub-populations, which may be of interest.

\section{Experiments}
\label{sec:experiments}

We now present a series of experiments to evaluate our proposed trimming methods. 

Our first set of experiments is a coverage study on simulated data to confirm that intervals produced using the methods of Sections~\ref{sec:inference} and \ref{sec:simul} obtain the target coverage level in large samples. Our second set of experiments is conducted on the National Health and Nutrition Examination Survey (NHANES) \citep{nhanes}, which has been studied in previous sample trimming work as well \citep{yang2018, hsu2013}, and provides a comparison between our methods and traditional (propensity-based) trimming methods. 

Our third set of experiments involves data from the 2022 American Causal Inference Conference (ACIC) data challenge \citep{acic_challenge}, which is a semi-synthetic dataset designed to mimic a real dataset of Medicare interventions. We use this data to demonstrate the value of the simultaneous trimming method of Section~\ref{sec:simul} in particular, constructing confidence intervals around several sub-populations simultaneously. We see that the confidence intervals for some of the sub-populations are much narrower than those around the full population. Most importantly, having set aside concerns of selective inference, we can freely choose which of these populations to report the effect on based on our preferences between confidence interval width and sample size.

\subsection{Coverage experiments}
\label{subsec:coverage}

\revision{We begin by studying the coverage rates of confidence intervals for the estimators in Section~\ref{sec:inference}. To do so, we generate data from the following model: 
\begin{equation}
	X_1, X_2, X_3\sim N(0,1),\quad Y(0)=2X_1-X_2+N(0, 1+\max(0, X_2))+\epsilon,\quad Y(1)=Y(0)+\tau(X),
	\label{eq:dgp}
\end{equation}
where the propensity score and treatment effect maps are
\begin{equation}
	\tau(X)=2\mathbf{1}\{X_3<0\}+\mathbf{1}\{X_3>0\}+5X_2\quad\text{and}\quad e(X)=\frac{1}{1+2\exp(X_2-X_1)}
\end{equation}
and the noise $\epsilon$ has a t-distribution with 5 degrees of freedom.

This is a straightforward linear model with two important features: first, there is heteroscedasticity in the distribution of $Y(0)$, meaning that there is in fact a non-trivial conditional variance to model, and the propensity $e(X)$ does not come from a standard logistic model, necessitating the use of non-parametric methods. We threshold the propensities to lie in $[0.05, 0.95]$\textemdash this is to obtain convergence to asymptotic coverage rates in reasonable sample sizes. With more extreme propensities, we expect to find similar results, but in possibly larger sample sizes than those considered here. Then, we estimate the propensities and conditional means using the $\texttt{regression\_forest}$ function from the R package $\texttt{grf}$ \citep{grf}; we fit conditional variances by subtracting estimates of $\E[Y_i\mid X_i]^2$ and $\E[Y_i^2\mid X_i]$ \revision{(this procedure may theoretically produce negative values, but does not in our case; if it does, we recommend instead regressing $X_i$ onto $Y_i$, and then regressing the squared residuals onto $X_i$ with a range-bounded regression method, which will always produce non-negative estimates). Following the discussion after Assumption~\ref{ass:trim}, we do not cross-fit $\hat{k}$ when selecting which units to trim, but do cross-fit when constructing the AIPW scores.}

Based on these estimates, we heteroscedastically trim a $\delta$ fraction of the data for $\delta = 0, 0.05, 0.1$. For each of these sub-populations, we construct marginal 95\% confidence intervals using the central limit theorem in \eqref{eq:trimmed_clt} implied by Theorem~\ref{thm:trimmed_clt}, as well as simultaneous 95\% confidence intervals for all three sub-populations using Algorithm~\ref{alg:simul_trim}.  We can estimate the coverage rates of these confidence intervals for the estimands $\tau_{\bar{A}}$ and $\tau_{\hat{A}}$ across multiple trials; results are shown in Table~\ref{tab:coverage}.

\begin{table}
	\centering
	\begin{tabular}{lllllll}
	\toprule
	&&\multicolumn{5}{c}{Sample size, $n$}\\
	\cmidrule{3-7}
	Estimand&Trimming&500&1000&2000&4000&8000\\
	\midrule

\multirow{4}{*}{$\tau_{\hat{A}}$}&$\delta=0.1$&0.962&0.949&0.949&0.97&0.967\\
&$\delta=0.05$&0.956&0.938&0.944&0.962&0.959\\
&$\delta=0$&0.934&0.929&0.933&0.962&0.96\\
&Simultaneous&0.958&0.937&0.94&0.963&0.959\\[0.15in]

\multirow{4}{*}{$\tau_{\bar{A}}$}&$\delta=0.1$&0.927&0.93&0.913&0.903&0.871\\
&$\delta=0.05$&0.936&0.921&0.92&0.931&0.919\\
&$\delta=0$&0.937&0.93&0.934&0.962&0.958\\
&Simultaneous &0.927&0.91&0.916&0.916&0.894\\
\bottomrule
	\end{tabular}
	\caption{Coverage of 95\% confidence intervals based on Theorem~\ref{thm:trimmed_clt} for treatment effects when heteroscedastically trimming \revision{0\%}, 5\%, and 10\% of the sample on data generated from \eqref{eq:dgp} for different sample sizes, as well as coverage of simultaneous confidence intervals constructed using Algorithm~\ref{alg:simul_trim}. All results are based on 1000 trials.}
	\label{tab:coverage}
\end{table}

\begin{table}
	\centering
	\begin{tabular}{lllllll}
	\toprule
	&&\multicolumn{5}{c}{Sample size, $n$}\\
	\cmidrule{3-7}
	&Trimming&500&1000&2000&4000&8000\\
	\midrule
	&$\delta=0.1$&1.976&1.477&1.105&0.819&0.608\\
&$\delta=0.05$&1.937&1.452&1.091&0.812&0.606\\
&$\delta=0$&1.903&1.436&1.085&0.813&0.61\\
&0.1/0.9 rule&1.914&1.468&1.127&0.85&0.636\\
	\bottomrule
	\end{tabular}
	\caption{Width of 95\% confidence intervals based on Theorem~\ref{thm:trimmed_clt} for treatment effects when heteroscedastically trimming \revision{0\%}, 5\%, and 10\% of the sample on data generated from \eqref{eq:dgp} for different sample sizes, as well as when trimming based on the 0.1/0.9 rule of \citet{crump}. All results are based on 1000 trials.}
	\label{tab:width}
\end{table}

These results confirm the predictions made by our theory: the coverage of our intervals for $\tau_{\hat{A}}$ reaches the target 95\% level in moderately large sample sizes, while the coverage of our intervals for $\tau_{\bar{A}}$ does not, since inference around this estimand is more challenging, as discussed in Section~\ref{sec:inference}. Of course, for $\delta=0$, there is no trimming, and so the two estimands coincide, and the desired coverage is attained for both. Furthermore, our simultaneous intervals also attain the desired coverage rates, showing that the bootstrap procedure of Section~\ref{sec:simul} has the desired properties. 

Finally, we also show in Table~\ref{tab:width} the widths of our intervals compared to intervals obtained by the homoscedastic trimming of Crump et al.~,which we refer to from here on as the 0.1/0.9 rule, since it discards units with propensities outside of this range. The differences in this simulation are small, since the amount of heteroscedasticity is not large, but we see that in larger sample sizes, intervals based on heteroscedastic trimming with $\delta=0.05$ are narrower than both the Crump intervals and the intervals obtained without any trimming. }


\subsection{NHANES experiments}
\label{subsec:nhanes}

Following prior work on sample trimming methods, we analyze the 2007--2008 U.S. National Health and Nutrition Examination Survey data \citep{nhanes}, attempt to estimate the effect of smoking on blood lead levels as in \citet{yang2018, hsu2013}, and compare our trimming methods to other common trimming methods. The dataset contains 3,340 subjects, 679 of whom are smokers, and 2,661 of whom are not. The response, blood lead level, ranges from 0.18 $\mu{}$g/dl to 33.10 $\mu{}$g/dl, in the population, suggesting that there may be heteroscedasticity in the data. As before, we fit the propensities and conditional means using the $\texttt{regression\_forest}$ function from the R package $\texttt{grf}$ \citep{grf}; we fit conditional variances by subtracting estimates of $\E[Y_i\mid X_i]^2$ and $\E[Y_i^2\mid X_i]$ (this may again theoretically produce negative values, but does not in our case, suggesting that the regressions are fairly accurate). \revision{All regressions use gender, education level, income, age, and race as covariates, as in \citet{yang2018}. We do not cross-fit our regression when selecting which units to trim, but do cross-fit when constructing point estimates and confidence intervals, as discussed in Section~\ref{sec:inference}.}



We compare three different sample trimming methods on this data to the baseline of no sample trimming. In the language of Section~\ref{sec:methods}, these are homoscedastic trimming with the constant cut-off of 11.1 (recall that $1/0.1+1/0.9\approx 11.1$, so this is the 0.1/0.9 rule proposed by \cite{crump}), homoscedastic trimming with a variance-minimizing cut-off, and heteroscedastic trimming with variance-minimizing cut-off. For all of these methods, we perform inference using the estimator of Theorem~\ref{thm:trimmed_clt}. Point estimates and 95\% confidence intervals constructed from asymptotic normal theory are shown in Figure~\ref{fig:nhanes}. We note that, for the optimal homscedastic trimming, the implied propensity cut-offs found are $0.04$ and $0.96$. 

\begin{figure}[t]
	\centering
	\includegraphics[width=\textwidth]{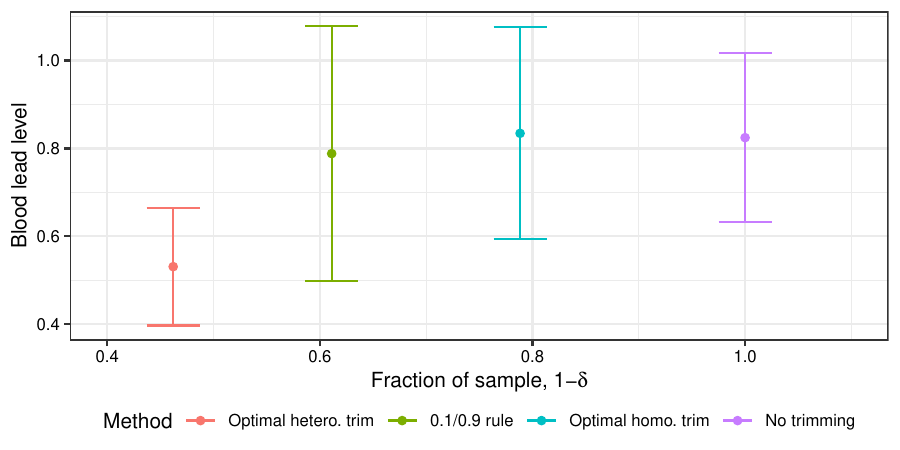}
	\caption{Normal theory 95\% confidence intervals for estimated treatment effects using three different sample trimming methods: optimal heteroscedasticity-aware trimming (red), optimal homoscedastic trimming (blue), the 0.1/0.9 rule of \cite{crump} (green) and also no trimming (purple).} 
	\label{fig:nhanes}
\end{figure}

\begin{figure}[t]
	\centering
	\includegraphics[width=\textwidth]{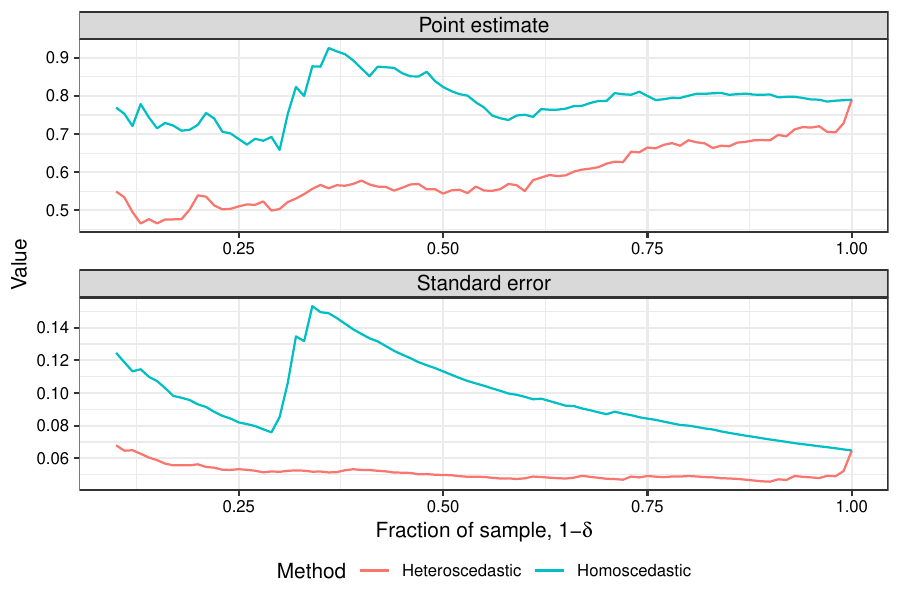}
	\caption{Point estimates (top panel) and standard errors (bottom panel) when trimming a fixed-fraction $\delta$ of the sample using either the homoscedastic (blue) or heteroscedastic (red) approach for a range of values of $\delta$.}
	\label{fig:nhanes_seq}
\end{figure}

We see from these results that, in this problem, sample trimming based only on propensities offers little variance reduction\textemdash the variance-minimizing homoscedastic trimming and the 0.1/0.9 rule both actually slightly increase the variance of our point estimate. On the other hand, the variance-minimizing heteroscedastic trimming offers a 17\% reduction in variance and visibly smaller confidence intervals, as seen in Figure~\ref{fig:nhanes}. 

To further underscore the importance of modeling heteroscedsaticity in this problem, we also plot the full path of point estimates and standard errors as a function of the fraction of the sample trimmed in Figure~\ref{fig:nhanes_seq}. In the language of Section~\ref{sec:methods}, this figure shows the result of trimming a fixed-fraction $\delta$ of the sample homoscedastically or heteroscedastically as $\delta$ ranges from 0 to 1. (Of course, the plot of standard errors in Figure~\ref{fig:nhanes_seq} is not the function that the optimal $\hat{\gamma}$ is chosen to minimize, since these are standard errors computed using the true responses; see Appendix~\ref{app:acic} for a further discussion of this issue.) 

We see that, when modeling heteroscedasticity, the standard error of our estimates decreases as we trim problematic units, and then begins to increase once our sample size becomes very small, which is essentially the expected behavior. On the other hand, when using the homoscedasticity assumption, the standard error of our estimate is more erratic and shows no consistent behavior, because we have failed to fully capture the variance structure of the problem. 

We can also see from Figure~\ref{fig:nhanes} that the estimated treatment effect for the sub-population found by the heteroscedastic trimming is quite different from the estimated treatment effects for the sub-populations found by other trimming methods. This is true even though the sub-population found by optimal heteroscedastic trimming is only about 15\% smaller than the sub-population found by the 0.1/0.9 rule. It is also to be expected, since units with high conditional variances are intuitively likely to have high conditional means as well, and thus dropping them from the sample may substantially change the treatment effect. This change in treatment effect also means that it is important to understand the sub-population we are trimming down to and how it differs from the full population. We explore these differences by analyzing the distribution of the education covariate in the full sample, the variance-minimizing heteroscedsatic sub-population, and the variance-minimizing homoscedastic sub-population, shown in Table~\ref{tab:nhanes_covariates}.

\revision{Table~\ref{tab:nhanes_covariates} shows that the distribution of education level in the population selected by the variance-minimizing heteroscedastic trimming is similar to the distribution of education level in the population selected by the variance-minimizing homoscedastic trimming, except that the heteroscedastic trimming more aggressively trims units with no high school education and less aggressively trims units with high school education. This behavior can be understood by looking at the average values of our fitted propensities and variances within each education level, which are also shown in Table~\ref{tab:nhanes_covariates}. These reveal that the estimated conditional variance for units with no high school is very high, which is why the heteroscedastic trimming trims them more aggressively, and prefers units with high school education instead. The net effect is that the population identified by the heteroscedastic trimming is, on average, more educated than the population identified by the homoscedastic trimming.}


Thus, in this application, trimming the sample based on propensities alone is not able to provide any meaningful variance reduction, but heteroscedasticity-aware sample trimming is able to identify a sub-population on which the variance of the estimated treatment effect is smaller than the variance of the estimated treatment effect on the full population. Further analysis of this sub-population shows that it is roughly a sub-population of more educated individuals. Of course, whether or not this sub-population is actually of interest will depend on analyst-specific considerations, and our goal here is not to suggest that this sub-population is necessarily preferable to the full population. Indeed, the question of whether inference around a sub-population is relevant to a given domain problem is central to all trimming methods. Even the propensity-trimmed sub-population, with its down-weighting of more educated individuals, represents a sub-population that may or may not be preferable to the full population. Instead, we hope this experiment sheds light on how our proposed methods compares to existing methods, and clarifies what they are able to offer.

\begin{table}
	\centering
	\begin{tabular}{llllllll}
		\toprule
		&\multicolumn{3}{c}{\% of sample}&&\multicolumn{3}{c}{Avg.}\\
	\cmidrule{2-4} \cmidrule{6-8}
		Education level &Original&$\hat{A}^{\text{het}}$&$\hat{A}^{\text{hom}}$&&$\hat{e}$&$\hat{\sigma}_0$&$\hat{\sigma}_1$\\
		\midrule
		No HS&13.3&5.8&10.8&&0.166&1.912&6.349\\
		Some HS&16.5&19.9&20.9&&0.29&1.992&4.352\\
		HS&25.1&33.8&28.6&&0.256&1.306&3.305\\
		Some college&25.4&28.8&28.4&&0.212&1.321&3.48\\
		College&19.6&11.7&11.3&&0.079&1.302&2.828\\
\bottomrule
	\end{tabular}
	\caption{Distribution of education level in the full sample of the NHANES data, in the optimal variance-aware trimming set $\hat{A}^{\text{het}}$, and in the optimal propensity trimming set $\hat{A}^{\text{hom}}$, along with average values of fitted models for $\hat{e}$, $\hat{\sigma}_0$, and $\hat{\sigma}_1$ within each education level. We see, for instance, that $\hat{A}^{\text{het}}$ includes fewer people with some high school than $\hat{A}^{\text{hom}}$ does, because even though their estimated propensities are not very small, their estimated variances are very large.}
	\label{tab:nhanes_covariates}
\end{table}

\subsection{ACIC experiments}
\label{subsec:acic}

\begin{figure}[t]
	\centering
	\begin{tabular}{cc}
		\includegraphics[width=0.5\textwidth]{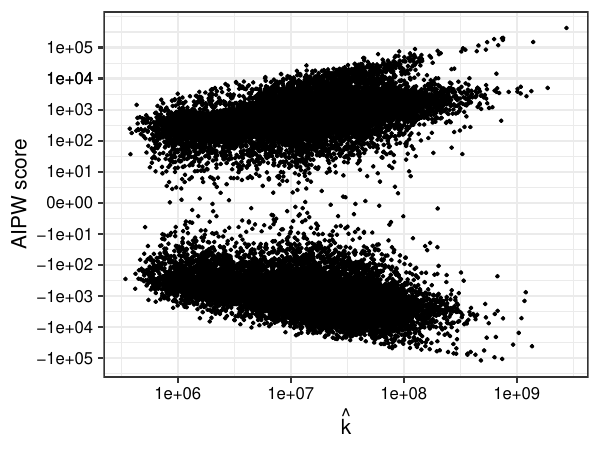}&\includegraphics[width=0.5\textwidth]{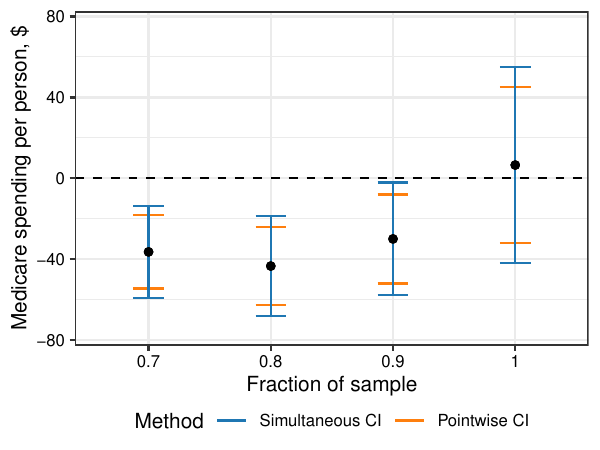}
	\end{tabular}
	\caption{Left: estimated trimming function $\hat{k}$ versus AIPW scores (the terms of the sum in \eqref{eq:tau_hat}); right: pointwise (orange) and simultaneous (blue) confidence intervals on subsets of different sizes of the ACIC data.}
	\label{fig:acic}
\end{figure}

\begin{table}
	\centering
	\begin{tabular}{lllllll}
		\toprule
		Sample frac.&$\hat{\tau}$&$\hat{\sigma}$&Pointwise CI&Simultaneous CI\\
		\midrule

		1&6.551&19.813&$(-31.941, 45.043)$&$(-41.966, 55.068)$\\
		0.9&-29.952&11.4&$(-51.951,-7.953)$&$(-57.867,-2.038)$\\
		0.8&-43.379&10.129&$(-62.591,-24.166)$&$(-68.182,-18.575)$\\
		0.7&-36.382&9.255&$(-54.459,-18.305)$&$(-59.044,-13.72)$\\
		\bottomrule

	\end{tabular}
	\caption{Pointwise and simultaneous confidence intervals on subsets of different sizes of the ACIC data. The treatment has a small positive effect on the full population, but this is only due to a few outliers. In fact, the effect of the treatment is negative on large sub-populations of the data, and the simultaneous confidence intervals show that this negative effect is significant on the sub-populations of 80\% and 70\% of the data, even though the effect of the treatment is not significant on the full population. Because our confidence intervals and uniformly valid, we may choose which population to report the effect on, or simply report all four.}
	\label{tab:acic}
\end{table}
Our final set of experiments involves data from the 2022 American Causal Inference Conference's data challenge \citep{acic_challenge}. This dataset is a semi-synthetic dataset constructed to mimic data from evaluations of Medicare interventions on the U.S. health-care system. \revision{The treatment in this data set is a binary intervention on the process by which Medicare covers the cost of medical expenses for patients, although the exact intervention has not been disclosed to avoid releasing sensitive information. The outcome is the amount of money Medicare spends on a given patient, and is extremely heavy-tailed, with some patients requiring a few hundred dollars of Medicare spending and others requiring tens of thousands of dollars of spending. In such a setting, trimming the sample to exclude some outlier patients is likely to offer significant variance reduction.}

The data consists of observations of patients over four years, with each patient assigned to a medical practice, and treatment assigned at the practice level. We make several simplifications to the problem: we restrict to a single year of data and work with a subset of 20\% of the data for computational reasons. We treat the practice to which a patient is assigned as a covariate, and set the probability that a patient is treated to be equal to the probability that their practice is treated. Finally, we do not account for the cluster structure of the data in our analysis \citep{abadie2017}; sample trimming in the presence of cluster structure is an exciting direction for future work. After these simplifications, there are 65,609 patients in the sample, whose propensities range from 0.11 to 0.85, suggesting that there are not significant overlap issues in the data. Meanwhile, responses range from $-\$7.20$ (indicating a patient paid Medicare, rather than the other way around) to $\$67,847$, suggesting there may potentially be significant heteroscedasticity. 

For our analysis, we use the simultaneous trimming method of Section~\ref{sec:simul}. Specifically, we use heteroscedastic fixed-fraction trimming with fractions $\{0, 0.1, 0.2, 0.3\}$; these methodological choices are based on our prior belief that there is a small fraction of patients with extreme responses, and that trimming these patients may significantly reduce variance. (For reference, the optimal heteroscedastic trimming trims nearly 90\% of the data, and is thus not an appealing option in this case.) We fit propensity scores using practice-level covariate and conditional means and variances using patient-level covariates. As in the experiments of Section~\ref{subsec:nhanes}, based on the discussion following Assumption~\ref{ass:trim}, all regressions are fit using the $\texttt{regression\_forest}$ function from the R package $\texttt{grf}$ \citep{grf}, \revision{and the regressions are cross-fit when selecting which units to trim, and not cross-fit when performing inference.}

The results of this analysis are reported in Table~\ref{tab:acic} and visualized in Figure~\ref{fig:acic}. These results paint a clear picture of the effects of the intervention: we see from the left panel of Figure~\ref{fig:acic} that there is significant heterogeneity with the treatment increasing costs for some patients, and decreasing costs for others, and that there are a handful of outliers with large AIPW scores that significantly affect our estimate. In the right panel of the same figure, we find that the treatment effect is slightly positive on the full population, but with a wide confidence interval due to the variance introduced by units with large responses. However, after trimming a small fraction of the data, the treatment effect is negative and the confidence interval around it is much narrower, because we have trimmed off the extreme units from the left panel. Furthermore, for all three sub-populations, we can conclude that this negative effect is statistically significant, in part because of the much lower variance of these populations. The change in point estimate also contributes to significance, but we note that if the sub-populations had the same standard error as the full population, their confidence intervals would contain zero. Thus the combination of simultaneous valid confidence intervals and heteroscedastic trimming effectively served to remove the handful of units distorting our estimate of the treatment effect and allowed us to identify a large sub-population of the data on which the treatment has a statistically significant negative effect.

Crucially, we would also not be able to draw this conclusion using classical sample trimming methods, since we would not know in advance which size of sub-population would be of interest. If we chose this sub-population adaptively from the data, the resulting inference would not be valid. By using our domain knowledge of the heavy-tailed behavior of the response and pre-committing to a set of trimming fractions, we are able to elucidate the structure of the response and draw statistically valid inferences.

The structure of this example---where the treatment effect on the full population is not significant but the treatment effect on a sub-population is---is illustrative, but not necessarily representative of all settings. In particular, it may well be the case that the treatment effect is not significant for any of the sub-populations considered, or significant for all of them. However, the goal of our method is not to find a sub-population on which there is a significant effect, but rather to find sub-populations with smaller confidence intervals than the full population, and allow for an analyst to choose freely between these and the full population. In this way, regardless of issues of statistical significance, the method can provide insight on the structure of treatment effects in the presence of outliers, and provide an analyst with statistically valid statements about potentially interesting sub-populations.

\section{Interpretive issues}
\revision{
	We have seen thus far that our theory allows us to perform inference around the estimand ${\tau}_{\hat{A}}$ using a doubly-robust estimator, that the resulting intervals have good coverage properties in simulations, and that these intervals are smaller than intervals provided by other trimming methods on real data. We now turn to the issue of interpreting $\hat{\tau}_{\hat{A}}$, focusing separately on the issues arising from modeling heteroscedasticity and the issues arising from the performing inference around ${\tau}_{\hat{A}}$ rather than $\tau_{\bar{A}}$. 

	Propensity-based trimming leads to an estimand that can be interpreted as the treatment effect for units that are sufficiently likely to receive both treatment and control, sometimes referred to as the population of clinical equipoise. In contrast, the heteroscedastically-trimmed estimand involves the interaction of two processes (the propensity score and the conditional variances), and is thus more difficult to interpret. However, we can still approximately interpret it by interpreting each of the two processes separately. From this perspective, it is the treatment effect for units that are sufficiently likely to receive both treatment and control, and also do not have highly atypical response distributions. (Of course, there may be some cancellation between the two processes here, which is why this interpretation is only an approximate one.) 

	Another potential interpretive issue is that $\tau_{\hat{A}}$ depends on quantities estimated from the data, and is not purely a functional of the underlying data distribution $\mathbb{P}$. For this reason, the estimate $\hat{\tau}_{\hat{A}}$ is best understood as the treatment effect on a specific subset of the present sample at hand\textemdash questions of whether that effect generalizes outside of the present sample must be answered by analyzing covariate distributions in the identified sample (as demonstrated in the NHANES data example in Section~\ref{subsec:nhanes}) and then comparing with covariate distributions in potential external populations. Domain knowledge and mechanistic understanding of the treatment are also helpful for this task.

	To better understand the relationship between $\tau_{\hat{A}}$ and $\tau_{\bar{A}}$, it is useful to study the difference when subtracting \eqref{eq:trimmed_clt} of Theorem~\ref{thm:trimmed_clt} from \eqref{eq:limit_trimmed_clt} of Theorem~\ref{thm:limit_trimmed_clt}, which gives $$\tau_{\hat{A}}-\tau_{\bar{A}}=\Delta_n+o_P(n^{-1/2}).$$ The correction term $\Delta_n$ is itself $o_P(1)$ (this follows from Lemma~\ref{lem:indicator_converge} in Appendix~\ref{sec:proofs}) and so we see that $\tau_{\hat{A}}\xrightarrow{\P} \tau_{\bar{A}}$ as $n\to \infty$. This means that the discrepancy between $\tau_{\hat{A}}$ and $\tau_{\bar{A}}$, which stems from finite-sample error in the estimation of nuisance components, vanishes in large samples; as such, the external validity of inferences around $\tau_{\hat{A}}$ may be more credible in larger sample sizes. 

	Despite these interpretive issues, we continue to recommend both modeling heteroscedasticity and performing inference around $\tau_{\hat{A}}$. Modeling heteroscedasticity provides, as we have seen in the experiment of Section~\ref{subsec:acic}, potentially much more variance reduction than assuming homoscedasticity, and can be thought of as, in the words of \citet{li2018}, ``defin[ing] and answer[ing] the question that can best be answered non-parametrically by the data at hand.'' With this perspective in mind, it is clear that modeling heteroscedasticity is a natural next step broadly in the literature on alternative, lower variance, estimands for problems with limited overlap. 

As for inference around $\tau_{\hat{A}}$, this approach allows us to perform inference around a trimmed estimand without smoothing the estimand, which further complicates interpretation. It is an endemic but underdiscussed challenge in the trimming literature that we cannot have our cake (an unsmoothed estimand) and eat it too (perform $\sqrt{n}$-inference) without accepting a model-dependent estimand like $\tau_{\hat{A}}$.}

\section{Conclusion}

In this paper we have extended the literature on trimming methods in two ways. First, we have proposed new methods for sample trimming that account for not only extreme propensities but also extreme conditional variances. Second, we have developed new theoretical results on inference after sample trimming, including a doubly-robust estimator and a bootstrap-based approach that gives simultaneously valid confidence intervals over multiple sub-populations. Our experiments show that our proposed estimators and confidence intervals achieve good coverage, reduce variance even when propensity-based trimming alone cannot, and enable analysts to choose between several different sub-populations based on any criteria they choose. This work raises several interesting future directions; we present two below.

\paragraph{Connections to heterogeneous treatment effects}

Some of the methods presented in this work may be remnisicient of the literature on estimating heterogeneous treatment effects. For example, \citet{athey2016} develop a method for identifying sub-populations of the data on which the treatment has a similar effect, and this is very nearly what was done in our experiments on the American Causal Inference Challenge data in Section~\ref{subsec:acic}. There, we found a sub-population of 90\% of the data for which our estimate of the treatment effect had much lower variance, but observed it was also one for which the point estimate of the treatment effect was substantially different. Thus it seems that we have identified 10\% of the data on which the treatment has a very different effect than on the other 90\%, which is essentially amounts to detecting a heterogeneous treatment effect. Such connections between heterogeneity in treatment effects and sample trimming have also been previously noted by \citet{crump2008}.

However, this interpretation is somewhat misleading: the trimming methods we have presented in this paper are focused on variance minimization, and if they also identify sub-populations on which point estimates are different, this is only because of the empirical fact that units with extreme variances often also have extreme conditional average treatment effects, not because our methods explicitly target treatment effect heterogeneity. The natural follow-up, then, is to explore variants of our methods that do incorporate information about treatment effect heterogeneity. For example, we might be interested in partitioning the sample into sub-populations on which the treatment has a similar effect, but also making sure that the variance of the point estimate on these sub-populations is not unacceptably high. Alternatively, we may consider forming sub-populations on which the treatment has a similar effect after trimming some fraction of outlier units on which the treatment has an unusually large or small effect, effectively creating a trimmed bin of ``other'' units in a partition of heterogeneous effects within the sample.

\paragraph{Inference around the original estimand} Finally, it is also worth revisiting the basic foundations of the trimming literature. The framework we have taken here, of performing inference around a modified estimand, is standard in the literature, but may not be ideal from an analyst's perspective, who may be unsure of which estimand is actually relevant to them. This is especially pertinent to our simultaneous trimming methods, since the path from a set of simultaneous confidence intervals around multiple estimands to a binary decision on implementing a policy is unclear. These issues are closely related to those that arise in problems with heterogeneous treatment effects, where a treatment may help some parts of the population but harm others, making the decision to treat or not fraught with complications. For these reasons, methods that do not shift the estimand but instead bound the impact of the trimmed units may be especially useful, and are worthy of further study.

\subsection*{Acknowledgements}
This work was supported in part by NSF CAREER Award \#2143176. We are grateful to Kevin Guo, Dominik Rothenhausler, and Art Owen for helpful comments on this work.

\bibliographystyle{plainnat}
\bibliography{trimming}

\newpage 

\appendix

\section{Proofs}
\label{sec:proofs}

In our proofs, as was also mentioned in Section~\ref{sec:model}, we employ empirical process notation~\citep{wellner2013, kennedy2016}. To summarize, we let $W_i = (X_i, Y_i, Z_i)$ be the entire triplet we observe for unit $i$, and we write $\P_nf=\frac{1}{n}\sum f(W_i)$ and $\P f= \int f(w)d\P(w)$. Note that for a random function $\hat{f}$, $\P \hat{f}$ is a random variable, since we do not integrate over the randomness in $\hat{f}$. In contrast, $\E[\hat{f}]$ is a deterministic quantity that integrates out the randomness in a new sample and in $\hat{f}$. We also define the norm $\| f\|_{L^q(\P)}=(\P |f|^q)^{1/q}$.

\subsection{Proof of Theorem~\ref{thm:trimmed_clt}}

In this section we prove Theorem~\ref{thm:trimmed_clt} from the main text. Before proceeding with the main proof, we introduce some notation and helpful lemmas. 

For the notation, we define $$\psi_1(W; \hat{\mu}_{(w)}, \hat{k}, \hat{\gamma})=(\hat{\mu}_1(X_i)-\hat{\mu}_0(X_i)-\tau(X_i))\mathbf{1}\{\hat{k}(X_i)\leq \hat{\gamma}\}$$ and $$\psi_2(W; \hat{\mu}_{(w)}, \hat{e}, \hat{k}, \hat{\gamma})=\left(\frac{Y_i(1)-\hat{\mu}_1(X_i)}{\hat{e}(X_i)}Z_i+\frac{Y_i(0)-\hat{\mu}_0(X_i)}{1-\hat{e}(X_i)}\right)\mathbf{1}\{\hat{k}(X_i)\leq \hat{\gamma}\},$$ so that the statement of the theorem is that 
\begin{equation}
	\hat{\tau}_{\hat{A}}-\tau_{\hat{A}}=\frac{1}{\P(\bar{k}(X_i)\leq \bar{\gamma})}\P_n\psi_2(W; \mu_{(w)}, e, \bar{k}, \bar{\gamma})+o_P(n^{-1/2}).
	\label{eq:trimmed_clt_app}
\end{equation}

Next, we show through two lemmas that the convergence assumptions on $\hat{\mu}_{(w)}$, $\hat{e}$, and $\hat{k}$ imply several other convergences that we will require. 

\begin{lemma}
	\label{lem:indicator_converge}
	Suppose $\hat{k}$ and $\hat{\gamma}$ satisfy Assumption~\ref{ass:trim}. Then 
	\begin{equation}
		\|\mathbf{1}\{\hat{k}(X_i)\leq \hat{\gamma}\}-\mathbf{1}\{\bar{k}(X_i)\leq \bar{\gamma}\}\|_{L^2(\P)}\xrightarrow{\P}0.
		\label{eq:indicator_converge}
	\end{equation}
\end{lemma}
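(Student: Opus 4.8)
The plan is to reduce the $L^2(\P)$ statement to an anti-concentration bound. Since both indicators take values in $\{0,1\}$, their difference is supported on $\{-1,0,1\}$, so squaring collapses the norm to the probability of a symmetric-difference event: $\|\mathbf{1}\{\hat{k}(X_i)\leq\hat{\gamma}\}-\mathbf{1}\{\bar{k}(X_i)\leq\bar{\gamma}\}\|_{L^2(\P)}^2 = \P(E)$, where $E=\{\hat{k}(X)\leq\hat{\gamma},\ \bar{k}(X)>\bar{\gamma}\}\cup\{\hat{k}(X)>\hat{\gamma},\ \bar{k}(X)\leq\bar{\gamma}\}$ and, following the conventions of Section~\ref{sec:model}, the probability $\P(E)$ is taken over a fresh draw $X\sim\P$ with $\hat{k},\hat{\gamma}$ held fixed, hence is itself a random variable in the fitting randomness. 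It therefore suffices to prove $\P(E)\xrightarrow{\P}0$, where the outer convergence is over the randomness in the fitted $\hat{k}$ and $\hat{\gamma}$.

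I would condition on the fitting randomness and work on the \emph{good event} $G=\{\|\hat{k}-\bar{k}\|_{L^2(\P)}\leq\eta\}\cap\{|\hat{\gamma}-\bar{\gamma}|\leq\eta\}$, which by Assumption~\ref{ass:trim} has probability tending to $1$ for any fixed $\eta>0$. On $G$ I split $E$ according to the size of the pointwise error: for a threshold $\delta>0$, $\P(E)\leq \P(|\hat{k}(X)-\bar{k}(X)|>\delta)+\P(E\cap\{|\hat{k}(X)-\bar{k}(X)|\leq\delta\})$. The first term is controlled by Chebyshev's inequality together with the $L^2$ bound available on $G$, giving $\P(|\hat{k}(X)-\bar{k}(X)|>\delta)\leq \|\hat{k}-\bar{k}\|_{L^2(\P)}^2/\delta^2\leq\eta^2/\delta^2$.

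The key step is the second term, where the bounded-density hypothesis enters. On $E\cap\{|\hat{k}(X)-\bar{k}(X)|\leq\delta\}$, combining the defining inequalities of $E$ with $|\hat{k}(X)-\bar{k}(X)|\leq\delta$ and $|\hat{\gamma}-\bar{\gamma}|\leq\eta$ via the triangle inequality forces $\hat{k}(X)$ into the interval $(\hat{\gamma}-\eta-\delta,\ \hat{\gamma}+\eta+\delta]$ of length $2(\eta+\delta)$; for instance, when $\hat{k}(X)\leq\hat{\gamma}$ and $\bar{k}(X)>\bar{\gamma}$ one has $\hat{k}(X)\geq\bar{k}(X)-\delta>\bar{\gamma}-\delta\geq\hat{\gamma}-\eta-\delta$, and the opposite case is symmetric. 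Since $\hat{k}(X)$ has a density bounded by $B$, this interval carries probability at most $2B(\eta+\delta)$. Collecting the two pieces gives $\P(E)\leq \eta^2/\delta^2+2B(\eta+\delta)$ on $G$. Choosing $\delta=\sqrt{\eta}$ reduces the right-hand side to $\eta+2B(\eta+\sqrt{\eta})$, which can be driven below any prescribed $\epsilon>0$ by taking $\eta$ small. As $\P(G)\to1$, this yields $\P(\P(E)\geq\epsilon)\to0$ for every $\epsilon>0$, i.e. $\P(E)\xrightarrow{\P}0$, which is the claim.

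I expect the anti-concentration step to be the main obstacle: it is the only place the bounded-density assumption is used, and it requires carefully localizing $\hat{k}(X)$ to a short interval through the triangle inequality before the density bound can be applied. The two-sided case analysis of $E$ is routine bookkeeping, but some care is needed to keep the interval length $2(\eta+\delta)$ uniform over the realization of $\hat{k},\hat{\gamma}$, so that the single constant $B$ governs the bound everywhere on $G$.
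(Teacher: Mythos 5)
Your proof is correct, and it reaches the conclusion by a slightly different and arguably tighter route than the paper. The paper first applies the triangle inequality to separate the two sources of error, writing the indicator difference as (threshold perturbation at fixed $\hat{k}$) plus (function perturbation at fixed $\bar{\gamma}$); the first piece is killed directly by the density bound $B|\hat{\gamma}-\bar{\gamma}|$, and the second by an $\epsilon$-separation argument combined with Markov's inequality, with the final $\epsilon\to 0$ step justified by an appeal to dominated convergence. You instead keep the symmetric-difference event intact and localize $\hat{k}(X)$ to a single interval of length $2(\eta+\delta)$ that absorbs both the threshold error and the pointwise function error simultaneously, then balance the Chebyshev term $\eta^2/\delta^2$ against the anti-concentration term $2B(\eta+\delta)$ via $\delta=\sqrt{\eta}$. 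The ingredients are the same (Chebyshev/Markov for the $L^2$ error in $\hat{k}$, the bounded density of $\hat{k}(X)$ for anti-concentration near the cut-off), but your version has one concrete advantage: the explicit quantitative bound makes the order of limits transparent, whereas the paper's dominated-convergence step leaves an exchange of limits ($\epsilon\to 0$ versus $n\to\infty$) implicit, and in fact the residual boundary strip $\{\bar{\gamma}-2\epsilon<\hat{k}(x)\leq\bar{\gamma}\}$ that the paper's containment omits is precisely what your density-bound step controls. Your closing remark about keeping the interval length uniform over realizations in $G$ is the right thing to worry about, and it is handled because Assumption~\ref{ass:trim} supplies a single constant $B$ for the density of $\hat{k}(X)$.
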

\begin{proof}
	We begin by writing 
		\begin{multline}
			\|\mathbf{1}\{\hat{k}(X_i)\leq \hat{\gamma}\}-\mathbf{1}\{\bar{k}(X_i)\leq \bar{\gamma}\}\|_{L^2(\P)}\leq \\ \|\mathbf{1}\{\hat{k}(X_i)\leq \hat{\gamma}\}-\mathbf{1}\{\hat{k}(X_i)\leq \bar{\gamma}\}\|_{L^2(\P)}+\|\mathbf{1}\{\hat{k}(X_i)\leq \bar{\gamma}\}-\mathbf{1}\{\bar{k}(X_i)\leq \bar{\gamma}\}\|_{L^2(\P)},
		\label{eq:triangle}
	\end{multline}
	by the triangle inequality, and then bound each of the two terms on the RHS of \eqref{eq:triangle} separately. 

	For the first term on the RHS of \eqref{eq:triangle}, we have 
	\begin{align}
		\|\mathbf{1}\{\hat{k}(X_i)\leq \hat{\gamma}\}-\mathbf{1}\{\hat{k}(X_i)\leq \bar{\gamma}\}\|_{L^2(\P)}^2&=\int \left(\mathbf{1}\{\hat{k}(x)\leq \hat{\gamma}\}-\mathbf{1}\{\hat{k}(x)\leq \bar{\gamma}\}\right)^2 d\P(x),\\
		&=\int \mathbf{1}\{\min(\hat{\gamma}, \bar{\gamma})\leq \hat{k}(x)\leq \max(\hat{\gamma}, \bar{\gamma})\} d\P(x),\\
			&\leq B|\hat{\gamma}-\gamma|,\\
			&=o_P(1),\label{eq:first_op1}
	\end{align}
	where $B$ is an upper bound on the density of $\hat{k}(x)$ from Assumption~\ref{ass:trim}, and the final equality uses the fact that $\hat{\gamma}$ converges to $\gamma$ in probability.

	For the second term on the RHS of \eqref{eq:triangle}, we assume without the loss of generality that $\hat{k}(x)\leq \bar{k}(x)$ for all $x$. Then, for any $\epsilon>0$, 

	\begin{align}
		&\|\mathbf{1}\{\hat{k}(x)< \bar{\gamma}\}-\mathbf{1}\{\bar{k}(x)< \bar{\gamma}\}\|_{L^2(\P)},\\
		=& \int \left( \mathbf{1}\{\hat{k}(x)< \bar{\gamma}\}-\mathbf{1}\{\bar{k}(x)< \bar{\gamma}\} \right)^2\, d\P(x),\\
		=& \int \mathbf{1}\{\hat{k}(x)< \bar{\gamma}< \bar{k}(x)\}\, d\P(x),\\
		=& \int \mathbf{1}\{\hat{k}(x)< \bar{\gamma}< \bar{k}(x)\}\mathbf{1}\{|\bar{k}(x)-\bar{\gamma}|\geq \epsilon\}+\mathbf{1}\{\hat{k}(x)\ \bar{\gamma}\leq \bar{k}(x)\}\mathbf{1}\{ |\bar{k}(x)-\bar{\gamma}|<\epsilon\}d\P(x),\\
		\leq&\int \mathbf{1}\{\hat{k}(x)\leq \bar{\gamma}\leq \bar{k}(x)\}\mathbf{1}\{|\bar{k}(x)-\bar{\gamma}|\geq \epsilon\}d\P(x)+\int\mathbf{1}\{|\bar{k}(x)-\bar{\gamma}|<\epsilon\} d\P(x),\\
		\leq &\int \mathbf{1}\{|\hat{k}(x)-\bar{k}(x)|\geq \epsilon\}\, d\P(x)+\P(|\bar{k}(X_i)-\bar{\gamma}|<\epsilon),\\
		\leq &\int \mathbf{1}\{|\hat{k}(x)-\bar{k}(x)|\geq \epsilon\}\, d\P(x)+2B\epsilon, \label{eq:k_hat_minus_k}
	\end{align}
\end{proof}
since $\bar{k}$ has density bounded by $B$.

We will now argue that the first term of \eqref{eq:k_hat_minus_k} is $o_P(1)$ under our assumptions. Indeed, for any $\epsilon'>0$, 
\begin{align}
	\P\left( \int \mathbf{1}\{|\hat{k}(x)-\bar{k}(x)|\geq \epsilon\}\, d\P(x)>\epsilon' \right)&\leq \P\left( \int \mathbf{1}\{|\hat{k}(x)-\bar{k}(x)|\geq \epsilon\}\, d\P(x)>0 \right),\\
	&\leq \P\left( \max_x |\hat{k}(x)- \bar{k}(x)|>\epsilon \right),\\
	&= o(1),
\end{align}
by Assumption~\ref{ass:trim}. Thus, the second term on the RHS of \eqref{eq:triangle} is $o_P(1)+2B\epsilon$ for any $\epsilon>0$, and this implies that it is $o_P(1)$ it self. Thus both terms of \eqref{eq:triangle} are $o_P(1)$, and the result follows. 

\begin{lemma}
	\label{lem:psi_converge}
	Suppose $\hat{\mu}_{(w)}$, $\hat{e}$, $\hat{k}$, and $\hat{\gamma}$ satisfy Assumptions. Then,
	\begin{equation}
		\| \psi_1(W; \hat{\mu}_{(w)}, \hat{k}, \hat{\gamma}) - \psi_1(W; \mu_{(w)}, k, \gamma)\|_{L^2(\P)}\xrightarrow{\P} 0
		\label{eq:psi1_convergence}
	\end{equation}
	and 
\begin{equation}
	\| \psi_2(W; \hat{\mu}_{(w)}, \hat{e}, \hat{k}, \hat{\gamma}) - \psi_2(W; \mu_{(w)}, e, k, \gamma)\|_{L^2(\P)}\xrightarrow{\P} 0.
	\label{eq:psi2_convergence}
	\end{equation}
\end{lemma}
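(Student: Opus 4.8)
The plan is to handle the two displayed convergences separately, since \eqref{eq:psi1_convergence} is essentially immediate while \eqref{eq:psi2_convergence} carries the real work. For \eqref{eq:psi1_convergence}, the first observation is that plugging the true regression functions into $\psi_1$ annihilates it: $\psi_1(W;\mu_{(w)},\bar k,\bar\gamma)=(\mu_1(X_i)-\mu_0(X_i)-\tau(X_i))\mathbf 1\{\bar k(X_i)\le\bar\gamma\}=0$ because $\tau=\mu_1-\mu_0$. Hence it suffices to show $\|\psi_1(W;\hat\mu_{(w)},\hat k,\hat\gamma)\|_{L^2(\P)}=o_P(1)$. Writing $\hat\mu_1-\hat\mu_0-\tau=(\hat\mu_1-\mu_1)-(\hat\mu_0-\mu_0)$, bounding the indicator by one, and applying the triangle inequality gives the bound $\|\hat\mu_1-\mu_1\|_{L^2(\P)}+\|\hat\mu_0-\mu_0\|_{L^2(\P)}$, which is $o_P(1)$ by the $L^2(\P)$-consistency of $\hat\mu_{(w)}$ assumed in Assumption~\ref{ass:dml}.

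For \eqref{eq:psi2_convergence}, write $\hat g$ and $g$ for the bracketed doubly-robust residual scores evaluated at $(\hat\mu_{(w)},\hat e)$ and at the truth $(\mu_{(w)},e)$, so that $\psi_2(\hat\cdot)=\hat g\,\mathbf 1\{\hat k\le\hat\gamma\}$ and $\psi_2(\cdot)=g\,\mathbf 1\{\bar k\le\bar\gamma\}$; note $g$ is exactly the summand of the i.i.d.\ sum in \eqref{eq:trimmed_clt}. I would split
\begin{equation*}
\psi_2(\hat\cdot)-\psi_2(\cdot)=(\hat g-g)\,\mathbf 1\{\hat k\le\hat\gamma\}+g\bigl(\mathbf 1\{\hat k\le\hat\gamma\}-\mathbf 1\{\bar k\le\bar\gamma\}\bigr),
\end{equation*}
and bound the two pieces in turn. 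The second (indicator) piece is where Lemma~\ref{lem:indicator_converge} enters. Since the indicator difference takes values in $\{0,\pm 1\}$, the squared $L^2(\P)$ norm of this piece equals $\P[g^2\,\mathbf 1_{E_n}]$, where $E_n$ is the event that the two indicators disagree and $\P(E_n)=o_P(1)$ by Lemma~\ref{lem:indicator_converge}. The key enabling fact is that $g\in L^2(\P)$: by unconfoundedness $\P g^2=\P k(X)$, which is finite by overlap together with $\sigma_w^2\in L^1(\P)$. Given this integrability I run a truncation argument: for any level $M$, $\P[g^2\mathbf 1_{E_n}]\le \P[g^2\mathbf 1\{g^2>M\}]+M\,\P(E_n)$, and choosing $M$ large (deterministically) to control the first term and then $n$ large to control the second shows this piece is $o_P(1)$.

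For the first (nuisance) piece I bound the indicator by one and show $\|\hat g-g\|_{L^2(\P)}=o_P(1)$. Treating each treatment arm separately, the algebraic identity
\begin{equation*}
\frac{Y(w)-\hat\mu_w}{\hat e_w}-\frac{Y(w)-\mu_w}{e_w}=\frac{\mu_w-\hat\mu_w}{\hat e_w}+(Y(w)-\mu_w)\frac{e_w-\hat e_w}{\hat e_w\,e_w},
\end{equation*}
(with $e_1=e$, $e_0=1-e$, and similarly for $\hat e_w$) decomposes the error into an outcome-regression term and a propensity term. The outcome-regression term is $o_P(1)$ in $L^2(\P)$ by consistency of $\hat\mu_w$, provided the denominators are bounded. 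The propensity term is the more delicate: conditioning on $X$ and using unconfoundedness, its squared norm is of order $\P[\sigma_w^2(X)\,(e-\hat e)^2\cdot(\text{bounded})]$, and because $\sigma_w^2$ is integrable but possibly unbounded I again truncate, splitting at $\{\sigma_w^2>M\}$ and using $(e-\hat e)^2\le 1$ on the tail together with $\|\hat e-e\|_{L^2(\P)}=o_P(1)$ on the bulk. To keep all denominators bounded I assume, as is standard, that $\hat e$ is truncated into $[\eta,1-\eta]$ so that overlap is inherited by the estimate.

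The main obstacle throughout is that both pieces require bounding a product of an \emph{unbounded} factor ($g$, or $\sigma_w^2(X)$, or the residual $Y(w)-\mu_w$) against a factor that converges to zero only in $L^2(\P)$ or in probability; neither pointwise nor $L^2$ control of the small factor alone suffices. The device that resolves this uniformly is the truncation argument, which trades off the integrability furnished by finite conditional variances and overlap against the probabilistic convergence of the nuisance estimates and of the estimated trimming region.
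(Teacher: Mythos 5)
Your proof is correct and follows essentially the same route as the paper's: the same splitting of $\psi_2(W;\hat{\mu}_{(w)},\hat{e},\hat{k},\hat{\gamma})-\psi_2(W;\mu_{(w)},e,k,\gamma)$ into a nuisance-error piece, handled by the algebraic identity plus a truncation argument, and an indicator-difference piece, handled by Lemma~\ref{lem:indicator_converge} plus truncation against the integrability of $k(X)$. The only differences are cosmetic: you dispatch \eqref{eq:psi1_convergence} by noting that $\psi_1$ vanishes identically at the truth, you truncate on $g^2$ and $\sigma_w^2(X)$ where the paper truncates on the residual $|Y_i(1)-\mu_{(1)}(X_i)|$, and you explicitly write out the indicator-difference term that the paper leaves as ``handled similarly.''
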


\begin{proof}
	We show only \eqref{eq:psi2_convergence}, since the proof for \eqref{eq:psi1_convergence} is similar. By the triangle inequality, $$\| \psi_2(W; \hat{\mu}_{(w)}, \hat{e}, \hat{k}, \hat{\gamma}) - \psi_2(W; \mu_{(w)}, e, k, \gamma)\|_{L^2(\P)}$$ is bounded by 
	\begin{multline}
		\left\| \left( \frac{Y_i(1)-\hat{\mu}_{(1)}(X_i)}{\hat{e}(X_i)}Z_i\mathbf{1}\{\hat{k}(X_i)\leq \hat{\gamma}\}-\frac{Y_i(1)-\mu_{(1)}(X_i)}{e(X_i)}Z_i\mathbf{1}\{\bar{k}(X_i)\leq \bar{\gamma}\}\right)\right\|_{L^2(\P)}\\
		+\left\| \left( \frac{Y_i(0)-\hat{\mu}_{(0)}(X_i)}{1-\hat{e}(X_i)}(1-Z_i)\mathbf{1}\{\hat{k}(X_i)\leq \hat{\gamma}\}-\frac{Y_i(0)-\mu_{(0)}(X_i)}{1-e(X_i)}(1-Z_i)\mathbf{1}\{\bar{k}(X_i)\leq \bar{\gamma}\}\right)\right\|_{L^2(\P)}.
		\label{eq:psi_split}
\end{multline}
We will show that the first term of \eqref{eq:psi_split} is $o_P(1)$; the second term can be handled similarly. 

The first term of \eqref{eq:psi_split} is, by the triangle inequality, bounded by
\begin{multline}
		\left\| \left( \frac{Y_i(1)-\hat{\mu}_{(1)}(X_i)}{\hat{e}(X_i)}Z_i\mathbf{1}\{\hat{k}(X_i)\leq \hat{\gamma}\}-\frac{Y_i(1)-\mu_{(1)}(X_i)}{e(X_i)}Z_i\mathbf{1}\{\hat{k}(X_i)\leq \hat{\gamma}\}\right)\right\|_{L^2(\P)}\\
		+\left\| \left( \frac{Y_i(1)-{\mu}_{(1)}(X_i)}{{e}(X_i)}Z_i\mathbf{1}\{\hat{k}(X_i)\leq \hat{\gamma}\}-\frac{Y_i(1)-\mu_{(1)}(X_i)}{e(X_i)}Z_i\mathbf{1}\{\bar{k}(X_i)\leq \bar{\gamma}\}\right)\right\|_{L^2(\P)}\\
		\label{eq:psi_split_2}
\end{multline}
Again, we will show that the first term of \eqref{eq:psi_split_2} is $o_P(1)$ and the other term can be handled similarly. 

The first term of \eqref{eq:psi_split_2} is 
\begin{align}
	&\left\| \left( \frac{Y_i(1)-\hat{\mu}_{(1)}(X_i)}{\hat{e}(X_i)}Z_i\mathbf{1}\{\hat{k}(X_i)\leq \hat{\gamma}\}-\frac{Y_i(1)-\mu_{(1)}(X_i)}{e(X_i)}Z_i\mathbf{1}\{\hat{k}(X_i)\leq \hat{\gamma}\}\right)\right\|_{L^2(\P)},\\
	&=\left\| \left( \frac{{\mu}_{(1)}(X_i)-\hat{\mu}_{(1)}(X_i)}{\hat{e}(X_i)}Z_i+(Y_i(1)-\mu_{(1)}(X_i))Z_i\left( \frac{1}{\hat{e}(X_i)}-\frac{1}{e(X_i)} \right)\right)\mathbf{1}\{\hat{k}(X_i)\leq \hat{\gamma}\} \right\|_{L^2(\P)},\\
	&\leq \left \|\frac{\mu_{(1)}(X_i)-\hat{\mu}_{(1)}(X_i)}{\hat{e}(X_i)}Z_i\mathbf{1}\{\hat{k}(X_i)\leq \hat{\gamma}\}\right\|_{L^2(\P)}+\left\|(Y_i(1)-\mu_{(1)}(X_i))Z_i\left( \frac{1}{\hat{e}(X_i)}-\frac{1}{e(X_i)} \right)\mathbf{1}\{\hat{k}(X_i)\leq \hat{\gamma}\}\right\|_{L^2(\P)}.
	\label{eq:decomp}
\end{align}
We now control each term of \eqref{eq:decomp} separately. First,
\begin{align}
	\left \|\frac{\mu_{(1)}(X_i)-\hat{\mu}_{(1)}(X_i)}{\hat{e}(X_i)}Z_i\mathbf{1}\{\hat{k}(X_i)\leq \hat{\gamma}\}\right\|_{L^2(\P)}^2&\lesssim \left\| (\mu_{(1)}(X_i)-\hat{\mu}_{(1)}(X_i))\mathbf{1}\{\hat{k}(X_i)\leq \hat{\gamma}\}\right\|_{L^2(\P)},\\
	&=o_P(1),
	\label{eq:first_factor}
\end{align}
where the first inequality is because $Z_i\in \{0,1\}$ and $\hat{e}(X_i)$ is bounded away from 0 and 1 as $n\to \infty$ by consistency, and the second is by Assumption~\ref{ass:dml}.

The second term of \eqref{eq:decomp} requires a truncation argument. Observe that for any $M>0$,
\begin{align}
	&\left\|(Y_i(1)-\mu_{(1)}(X_i))Z_i\left( \frac{1}{\hat{e}(X_i)}-\frac{1}{e(X_i)} \right)\mathbf{1}\{\hat{k}(X_i)\leq \hat{\gamma}\}\right\|_{L^2(\P)}\\ 
	= &\left\|(Y_i(1)-\mu_{(1)}(X_i))\mathbf{1}\{|Y_i(1)-\mu_{(1)}(X_i)|>M\}Z_i\left( \frac{1}{\hat{e}(X_i)}-\frac{1}{e(X_i)}\right)\mathbf{1}\{\hat{k}(X_i)\leq \hat{\gamma}\} \right\|_{L^2(\P)},\\
	&+\left\|(Y_i(1)-\mu_{(1)}(X_i))\mathbf{1}\{|Y_i(1)-\mu_{(1)}(X_i)|\leq M\}Z_i\left( \frac{1}{\hat{e}(X_i)}-\frac{1}{e(X_i)}\right)\mathbf{1}\{\hat{k}(X_i)\leq \hat{\gamma}\} \right\|_{L^2(\P)},\\
	\lesssim &\left\|(Y_i(1)-\mu_{(1)}(X_i))\mathbf{1}\{|Y_i(1)-\mu_{(1)}(X_i)|>M\}\right\|_{L^2(\P)}+\left\|MZ_i\left( \frac{1}{\hat{e}(X_i)}-\frac{1}{e(X_i)}\right)\mathbf{1}\{\hat{k}(X_i)\leq \hat{\gamma}\} \right\|_{L^2(\P)},\\
	= &\left\|(Y_i(1)-\mu_{(1)}(X_i))\mathbf{1}\{|Y_i(1)-\mu_{(1)}(X_i)|>M\}\right\|_{L^2(\P)}+o_P(1)
	\label{eq:truncated}
\end{align}
where the second inequality again uses the fact that $Z_i\mathbf{1}\{\hat{k}(X_i)\leq \hat{\gamma}\}\in \{0,1\}$ and that $\frac{1}{\hat{e}}-\frac{1}{e}$ is bounded away from 0 and 1 as $n\to \infty$ by consistency and the third equality uses Assumption~\ref{ass:dml}.  Now, as $M\to \infty$, the first term of \eqref{eq:truncated} vanishes because $Y_i(1)$ has finite variance, so we conclude that the second term of \eqref{eq:decomp} is in fact $o_P(1)$. 

Having shown that both terms of \eqref{eq:decomp} are $o_P(1)$, we find that \eqref{eq:decomp} itself is $o_P(1)$ as well, completing the proof. 
\end{proof}

With these lemmas in hand, we can now prove Theorem~\ref{thm:trimmed_clt}. 
	\begin{proof}[Proof of Theorem~\ref{thm:trimmed_clt}]
		In the notation introduced above, we have
	\begin{equation}
		\hat{\tau}_{\hat{A}}-\tau_{\hat{A}}=\frac{\P_n\left(\psi_1(W; \hat{\mu}_{(w)}, \hat{k}, \hat{\gamma})+\psi_2(\hat{\mu}_{(w)}, \hat{e}, \hat{k}, \hat{\gamma})\right)}{n_{\hat{A}}/n}.
		\label{eq:notation}
	\end{equation}
	We now proceed in two steps. The first step is to show that 
	\begin{equation}
		\frac{n_{\hat{A}}}{n}\xrightarrow{\P} \P(\bar{k}(X_i)\leq \bar{\gamma})
	\label{eq:set_size}
	\end{equation}
	and the second is to show that
	\begin{equation}
		\P_n\left( \psi_1(W; \hat{\mu}_{(w)}, \hat{k}, \hat{\gamma})+\psi_2(\hat{\mu}_{(w)}, \hat{e}, \hat{k}, \hat{\gamma}) \right)=\P_n\psi_2(W; \mu_{(w)}, e, \bar{k} ,\bar{\gamma})+o_P(n^{-1/2}).
		\label{eq:tau_hat_minus_tau}
	\end{equation}
	Once we establish \eqref{eq:set_size} and \eqref{eq:tau_hat_minus_tau}, the desired result \eqref{eq:trimmed_clt_app} follows from Slutsky's lemma.

	For the first step, we begin by noting that $n_{\hat{A}}/n=\P_n\mathbf{1}\{\hat{k}(X_i)\leq \hat{\gamma}\}$ and then computing 
	\begin{align}
		\P_n\mathbf{1}\{\hat{k}(X_i)\leq \hat{\gamma}\}&=(\P_n-\P)\mathbf{1}\{\hat{k}(X_i)\leq \hat{\gamma}\}+\P\mathbf{1}\{\hat{k}(X_i)\leq \hat{\gamma}\},\\
		&= (\P_n-\P)\mathbf{1}\{\bar{k}(X_i)\leq \bar{\gamma}\}+\P\mathbf\{\hat{k}(X_i)\leq \hat{\gamma}\}+o_P(n^{-1/2}),\\
		&= \P_n\mathbf{1}\{\bar{k}(X_i)\leq \bar{\gamma}\}+\P(\mathbf{1}\{\hat{k}(X_i)\leq \hat{\gamma}\}-\mathbf{1}\{\bar{k}(X_i)\leq \bar{\gamma}\})+o_P(n^{-1/2}),\label{eq:denom_diff},\\
		&\leq \P_n\mathbf{1}\{\bar{k}(X_i)\leq \bar{\gamma}\}+\|\mathbf{1}\{\hat{k}(X_i)\leq \hat{\gamma}\}-\mathbf{1}\{\bar{k}(X_i)\leq \bar{\gamma}\}\|_{L^1(\P)}+o_P(n^{-1/2}),\\
		&= \P_n\mathbf{1}\{\bar{k}(X_i)\leq \bar{\gamma}\}+o_P(1),\\
		&= \P(\bar{k}(X_i)\leq \bar{\gamma})+o_P(1),
		\label{eq:denom}
	\end{align}
	where the second equality uses either Lemma 2 of \citet{kennedy2020} and Assumption~\ref{ass:cross}(a) or Lemma 19.24 of \citet{van2000}, Assumption~\ref{ass:cross}(b), and the fact that $\mathbf{1}\{\hat{k}(X_i)\leq \hat{\gamma}\}$ is Donsker because $\hat{k}$ lies in a VC-subgraph class and the indicator is a monotone function \citep{wellner2013}. The fifth equality uses Lemma~\ref{lem:indicator_converge} and the fact that $L^2$-convergence implies $L^1$-convergence, and the sixth equality uses the law of large numbers. Since \eqref{eq:denom} implies \eqref{eq:set_size}, we are done with the first step. 

	We now proceed to the second step of showing \eqref{eq:tau_hat_minus_tau}. Note that, by arguments similar to those in the previous paragraph, $\psi_1$ and $\psi_2$ are both Donsker under Assumption~\ref{ass:trim}(b).

	With this in mind, we first consider $\P_n\psi_1$ and find that 
	\begin{align}
		\P_n \psi_1(W; \hat{\mu}_{(w)}, \hat{k}, \hat{\gamma})&=(\P_n-\P)\psi_1(W; \hat{\mu}_{(w)}, \hat{k}, \hat{\gamma})+\P \psi_1(W; \hat{\mu}_{(w)}, \hat{k}, \hat{\gamma}),\\
		&= (\P_n-\P)\psi_1(W; \mu_{(w)}, \hat{k}, \hat{\gamma})+\P \psi_1(W; \hat{\mu}_{(w)}, \hat{e}, \hat{k}, \hat{\gamma})+o_P(n^{-1/2})\label{eq:hats_off_1},\\
		&= \P \psi_1(W; \hat{\mu}_{(w)}, \hat{k}, \hat{\gamma})+o_P(n^{-1/2}),
		\label{eq:m1}
	\end{align}
	where the second equality uses either Assumption~\ref{ass:cross}(a) and Lemma 2 of \citet{kennedy2020} or Assumption~\ref{ass:cross}(b), Lemma~\ref{lem:psi_converge}, Lemma 19.24 of \citet{van2000}, and the fact that $\psi_1$ and $\psi_2$ are Donsker. The third equality follows from the fact that $\psi_1(W; \mu_{(w)}, \hat{k}, \hat{\gamma})=0$ identically. 

	Similarly,
	\begin{align}
		\P_n \psi_2(W; \hat{\mu}_{(w)}, \hat{e}, \hat{k}, \hat{\gamma})&= (\P_n-\P) \psi_2(W; \hat{\mu}_{(w)}, \hat{e}, \hat{k}, \hat{\gamma})+\P \psi_2(W; \hat{\mu}_{(w)}, \hat{e}, \hat{k}, \hat{\gamma}),\\
		&= (\P_n-\P) \psi_2(W; \mu_{(w)}, e, k, \gamma)+\P \psi_2(W;\hat{\mu}_{(w)}, \hat{e}, \hat{k}, \hat{\gamma})+o_P(n^{-1/2}),\label{eq:hats_off}\\
		&= \P_n \psi_2(W; \mu_{(w)}, e, k, \gamma)+\P \psi_2(W; \hat{\mu}_{(w)}, \hat{e}, \hat{k}, \hat{\gamma})+o_P(n^{-1/2}),
		\label{eq:m2}
	\end{align}
	where \eqref{eq:hats_off} is justified in the same way as \eqref{eq:hats_off_1} and the third equality is justified by the fact that $\P \psi_2(W; \mu_{(w)}, e, k, \gamma)=0$ identically. 

	Taken together, \eqref{eq:m1} and \eqref{eq:m2} imply that 
	\begin{equation}
		\P_n(\psi_1+\psi_2)=\P_n\psi_2(W; \mu_{(w)}, e, k, \gamma )+\P\left(\psi_1(W; \hat{\mu}_{(w)}, \hat{e}, \hat{k}, \hat{\gamma})+\psi_2(W; \hat{\mu}_{(w)}, \hat{e}, \hat{k}, \hat{\gamma})\right)+o_P(n^{-1/2}).
		\label{eq:sum_m1_m2}
	\end{equation}
Now, the second term of \eqref{eq:sum_m1_m2} is 
\begin{align}
	&= \P\left[ \left( \hat{\mu}_1(X_i)-\hat{\mu}_0(X_i)-\tau(X_i)+\frac{Y_i-\hat{\mu}_1(X_i)}{\hat{e}(X_i)}Z_i+\frac{Y_i-\hat{\mu}_0(X_i)}{1-\hat{e}(X_i)}(1-Z_i) \right)\mathbf{1}\{\hat{k}(X_i)\leq \hat{\gamma}\} \right],\\
	&= \sum_{w\in \{0,1\}}\P\left[ \left( \hat{\mu}_{(w)}(X_i)-\mu_{(w)}(X_i)+\frac{Y_i-\hat{\mu}_{(w)}(X_i)}{\hat{e}(X_i)}Z_i \right)\mathbf{1}\{\hat{k}(X_i)\leq \hat{\gamma}\} \right].
\end{align}

We bound the $w=1$ term; the $w=0$ term is analogous. The $w=1$ term is, after applying the tower rule conditional on $X_i$ and simplifying,
\begin{align}
	&= \P\left[ \left( \frac{(\hat{\mu}_1(X_i)-\mu_1(X_i))\hat{e}(X_i)+(\mu_1(X_i)-\hat{\mu}_1(X_i))e(X_i)}{\hat{e}(X_i)}\mathbf{1}\{\hat{k}(X_i)\leq \hat{\gamma}\} \right) \right],\\
	&\lesssim \P\left[ (\hat{\mu}_1(X_i)-\mu_1(X_i))(\hat{e}(X_i)-e(X_i))\mathbf{1}\{\hat{k}(X_i)\leq \hat{\gamma}\} \right],\\
	&\leq \sqrt{\P\left((\hat{\mu}_1(X_i)-\mu_1(X_i))^2\mathbf{1}\{\hat{k}(X_i)\leq \hat{\gamma}\} \right)\cdot \P\left( (\hat{e}(X_i)-e(X_i))^2 \mathbf{1}\{\hat{k}(X_i)\leq \hat{\gamma}\} \right)},\\
	&= \| (\hat{\mu}_1(X_i)-\mu_1(X_i))\mathbf{1}\{\hat{k}(X_i)\leq \hat{\gamma}\}\|_{L^2(\P)}\times \|(\hat{e}(X_i)-e(X_i))\mathbf{1}\{\hat{k}(X_i)\leq \hat{\gamma}\}\|_{L^2(\P)},
	\label{eq:norm_product}
\end{align}

where the first inequality uses the fact that $\hat{e}(X_i)$ is bounded away from $0$ and $1$ for sufficiently large $n$ by consistency and the second uses Cauchy-Schwarz. Then, \eqref{eq:norm_product} is $o_P(n^{-1/2})$ by Assumption~\ref{ass:dml}, which establishes \eqref{eq:tau_hat_minus_tau} and completes the proof.
\end{proof}

\subsection{Proof of Theorem~\ref{thm:limit_trimmed_clt}}

In this section, we analyze $\hat{\tau}_{\hat{A}}-\tau_{\bar{A}}$ and show that 
\begin{equation}
		\hat{\tau}_{\hat{A}}-\tau_{\bar{A}}=\frac{1}{\P(\bar{k}(X_i)\leq \bar{\gamma})}\left(\frac{1}{n}\sum_{i=1}^n \left(\frac{Y_i-\mu_1(X_i)}{e(X_i)}Z_i+\frac{Y_i(0)-\mu_0(X_i)}{1-e(X_i)}(1-Z_i)\right)\mathbf{1}\{\bar{k}(X_i)\leq\bar{\gamma}\}+\Delta_n\right),
		\label{eq:thm2_statement}
	\end{equation}
	for 
	\begin{equation}
		\Delta_n=\P\left[ \left(\tau(X_i)-\E[\tau(X_i)\mid \bar{k}(X_i)\leq \bar{\gamma}]\right)\left(\mathbf{1}\{\hat{k}(X_i)\leq \hat{\gamma}\}-\mathbf{1}\{\bar{k}(X_i)\leq \bar{\gamma}\}\right) \right]+o_P(n^{-1/2}),
		\label{eq:delta_def}
	\end{equation}
as claimed in Theorem~\ref{thm:limit_trimmed_clt}.

This is slightly more involved than the results of the previous section because both the numerators and the denominators of $\hat{\tau}_{\hat{A}}$ and $\tau_{\bar{A}}$ differ. Thus we we first analyze the numerators, then the denominators, and then combine our results to obtain the theorem. To simplify notation, let 
\begin{equation}
	\hat{S}_{\hat{A}}=\sum_{i=1}^n \hat{\tau}(X_i)\mathbf{1}\{\hat{k}(X_i)\leq \hat{\gamma}\},\quad S_{\bar{A}}=\sum_{i=1}^n \tau(X_i)\mathbf{1}\{\bar{k}(X_i)\leq \bar{\gamma}\},
\end{equation}
be the numerators of $\hat{\tau}_{\hat{A}}$ and $\tau_{\bar{A}}$, so that $\hat{\tau}_{\hat{A}}=\hat{S}_{\hat{A}}/n_{\hat{A}}$ and $\tau_{\bar{A}}=S_{\bar{A}}/n_{\bar{A}}$. 

The following lemma analyzes the difference between $\hat{S}_{\hat{A}}$ and $S_{\bar{A}}$. 

\begin{lemma}
	\label{lem:numerator}
	We have
	\begin{multline}
		\frac{\hat{S}_{\hat{A}}-S_{\bar{A}}}{n}= \frac{1}{n}\sum_{i=1}^n \left(\frac{Y_i-\mu_1(X_i)}{e(X_i)}Z_i+\frac{Y_i(0)-\mu_0(X_i)}{1-e(X_i)}(1-Z_i)\right)\mathbf{1}\{\bar{k}(X_i)\leq \bar{\gamma}\}\\ +\P[\tau(X_i)(\mathbf{1}\{\hat{k}(X_i)\leq \hat{\gamma}\}-\mathbf{1}\{\bar{k}(X_i)\leq \bar{\gamma}\})]+o_P(n^{-1/2}).
	\end{multline}
\end{lemma}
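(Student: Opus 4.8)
The plan is to reduce everything to the computation already carried out in the proof of Theorem~\ref{thm:trimmed_clt}, together with a single empirical-process estimate. The starting point is the algebraic identity that, since $\hat\tau(X_i)$ is the AIPW score of \eqref{eq:aipw_score}, the definitions of $\psi_1$ and $\psi_2$ give (cf. \eqref{eq:notation})
\begin{equation}
	\hat\tau(X_i)\mathbf{1}\{\hat k(X_i)\le\hat\gamma\}=\tau(X_i)\mathbf{1}\{\hat k(X_i)\le\hat\gamma\}+\psi_1(W;\hat\mu_{(w)},\hat k,\hat\gamma)+\psi_2(W;\hat\mu_{(w)},\hat e,\hat k,\hat\gamma)
\end{equation}
identically. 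Averaging over $i$ and subtracting $S_{\bar A}/n=\P_n[\tau(X_i)\mathbf{1}\{\bar k(X_i)\le\bar\gamma\}]$ yields
\begin{equation}
	\frac{\hat S_{\hat A}-S_{\bar A}}{n}=\P_n\big[\tau(X_i)(\mathbf{1}\{\hat k(X_i)\le\hat\gamma\}-\mathbf{1}\{\bar k(X_i)\le\bar\gamma\})\big]+\P_n(\psi_1+\psi_2),
\end{equation}
so the proof splits into controlling the two terms on the right.

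The second term is exactly the quantity analyzed in the proof of Theorem~\ref{thm:trimmed_clt}: equation~\eqref{eq:tau_hat_minus_tau} establishes that $\P_n(\psi_1+\psi_2)=\P_n\psi_2(W;\mu_{(w)},e,\bar k,\bar\gamma)+o_P(n^{-1/2})$, whose leading term is precisely the oracle influence-function average appearing in the statement of the lemma. No new work is needed here; I would simply invoke that display.

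For the first term I would split off the population piece,
\begin{equation}
	\P_n\big[\tau\,(\mathbf{1}\{\hat k\le\hat\gamma\}-\mathbf{1}\{\bar k\le\bar\gamma\})\big]=(\P_n-\P)g_n+\P\big[\tau\,(\mathbf{1}\{\hat k\le\hat\gamma\}-\mathbf{1}\{\bar k\le\bar\gamma\})\big],
\end{equation}
where $g_n(W)=\tau(X_i)(\mathbf{1}\{\hat k(X_i)\le\hat\gamma\}-\mathbf{1}\{\bar k(X_i)\le\bar\gamma\})$ and arguments are suppressed. The population piece is precisely the term retained in the statement, so it remains only to show $(\P_n-\P)g_n=o_P(n^{-1/2})$. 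Under Assumption~\ref{ass:cross}(a), $g_n$ is measurable with respect to the auxiliary folds, so conditionally on those folds $\E[((\P_n-\P)g_n)^2]\le n^{-1}\|g_n\|_{L^2(\P)}^2$; since Lemma~\ref{lem:indicator_converge} gives $\|g_n\|_{L^2(\P)}\xrightarrow{\P}0$, conditional Markov yields $(\P_n-\P)g_n=o_P(n^{-1/2})$, exactly as in Lemma~2 of \citet{kennedy2020}. Under Assumption~\ref{ass:cross}(b), the indicators $\mathbf{1}\{\hat k\le\hat\gamma\}$ range over a VC class (thresholds of a VC-subgraph $\hat k$), so $g_n$ lies in a fixed Donsker class, and $\|g_n\|_{L^2(\P)}\to0$ forces $\sqrt n(\P_n-\P)g_n\xrightarrow{\P}0$ by asymptotic equicontinuity.

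The main obstacle is the convergence $\|g_n\|_{L^2(\P)}\xrightarrow{\P}0$ and the accompanying Donsker bookkeeping, both complicated by the fact that $\tau$ need not be bounded (we have assumed only finite conditional variances). Because the indicator difference takes values in $\{-1,0,1\}$, its square equals its absolute value, so $\|g_n\|_{L^2(\P)}^2=\P[\tau^2\,|\mathbf{1}\{\hat k\le\hat\gamma\}-\mathbf{1}\{\bar k\le\bar\gamma\}|]$; I would bound this with the truncation device already used in Lemma~\ref{lem:psi_converge}, splitting $\tau^2=\tau^2\mathbf{1}\{|\tau|\le M\}+\tau^2\mathbf{1}\{|\tau|>M\}$, killing the first piece for each fixed $M$ via Lemma~\ref{lem:indicator_converge} and making the second small as $M\to\infty$ by integrability of $\tau^2$. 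The same truncation lets me approximate $g_n$ in $L^2(\P)$ by the bounded product $((\tau\wedge M)\vee(-M))$ times the indicator difference, which is Donsker-preserving, justifying the equicontinuity step. Assembling the three displays then gives the lemma.
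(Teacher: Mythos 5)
Your proof is correct, and it reaches the lemma by a somewhat different decomposition than the paper's. The paper works with the single function $\psi(W;\hat{\mu}_{(w)},\hat{e},\hat{k},\hat{\gamma})$ (the full AIPW score times the indicator), writes $\hat{S}_{\hat{A}}/n=(\P_n-\P)\psi+\P\psi$, swaps the estimated nuisances for the oracle ones in the empirical-process term, and then extracts both the doubly-robust $o_P(n^{-1/2})$ remainder and the $\P[\tau(X_i)(\mathbf{1}\{\hat{k}\leq\hat{\gamma}\}-\mathbf{1}\{\bar{k}\leq\bar{\gamma}\})]$ term by fresh algebra on the population bias $\P(\psi(\hat{\cdot})-\psi(\mu_{(w)},e,\bar{k},\bar{\gamma}))$ (adding and subtracting $\mu_1\hat{e}\mathbf{1}\{\hat{k}\leq\hat{\gamma}\}$ and invoking the Cauchy--Schwarz product bound). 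You instead split the score pointwise as $\tau\mathbf{1}\{\hat{k}\leq\hat{\gamma}\}+\psi_1+\psi_2$ at the outset, which isolates the estimand-shift term as an exact empirical average $\P_n[\tau(\mathbf{1}\{\hat{k}\leq\hat{\gamma}\}-\mathbf{1}\{\bar{k}\leq\bar{\gamma}\})]$ and lets you reuse display \eqref{eq:tau_hat_minus_tau} from Theorem~\ref{thm:trimmed_clt} verbatim for the rest; the only new work is showing $(\P_n-\P)[\tau(\mathbf{1}\{\hat{k}\leq\hat{\gamma}\}-\mathbf{1}\{\bar{k}\leq\bar{\gamma}\})]=o_P(n^{-1/2})$, which you do with the same cross-fitting/Donsker machinery the paper uses elsewhere. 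Both routes rely on identical ingredients (Lemma~\ref{lem:indicator_converge}, the empirical-process swap, the product rate bound), so the difference is organizational rather than conceptual, but your version buys cleaner modularity---it avoids re-deriving the population remainder from scratch---and it is actually more careful than the paper on one point: you explicitly truncate to handle the fact that $\tau$ is only square-integrable when verifying $\|\tau(\mathbf{1}\{\hat{k}\leq\hat{\gamma}\}-\mathbf{1}\{\bar{k}\leq\bar{\gamma}\})\|_{L^2(\P)}\to 0$ and the Donsker envelope condition, whereas the paper waves at this with ``arguments like those in the proof of Lemma~\ref{lem:psi_converge}.''
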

\begin{proof}
	As before, we begin by defining $$\psi(W; \hat{\mu}_{(w)}, \hat{e}, \hat{k}, \hat{\gamma})=\left(\hat{\mu}_1(X_i)-\hat{\mu}_0(X_i)+\frac{Y_i(1)-\hat{\mu}_1(X_i)}{\hat{e}(X_i)}Z_i-\frac{Y_i(0)-\hat{\mu}_0(X_i)}{1-\hat{e}(X_i)}(1-Z_i)\right)\mathbf{1}\{\hat{k}(X_i)\leq \hat{\gamma}\},$$ so that $\hat{S}_{\hat{A}}/n=\P_n \psi(W; \hat{\mu}_{(w)}, \hat{e}, \hat{k}, \hat{\gamma}).$ We can check that $\psi(W; \hat{\mu}_{(w)}, \hat{e}, \hat{k})$ converges in $L^2$ to $\psi(W; \mu_{(w)}, e, \bar{k})$ using arguments like those in the proof of Lemma~\ref{lem:psi_converge}, and so we can then write 
	\begin{align}
		\hat{S}_{\hat{A}}/n&=\P_n \psi(W; \hat{\mu}_{(w)}, \hat{e}, \hat{k}, \hat{\gamma}),\\
		&=(\P_n-\P)\psi(W; \hat{\mu}_{(w)}, \hat{e}, \hat{k}, \hat{\gamma})+\P\psi (W; \hat{\mu}_{(w)}, \hat{e}, \hat{k}, \hat{\gamma}),\\
		&=(\P_n-\P)\psi(W; {\mu}_{(w)}, {e}, \bar{k}, \bar{\gamma})+\P\psi (W; \hat{\mu}_{(w)}, \hat{e}, \hat{k}, \hat{\gamma})+o_P(n^{-1/2}),\\
		&= S_{\bar{A}}/n+\frac{1}{n}\sum_{i=1}^n\left( \frac{Y_i(1)-\mu_1(X_i)}{e(X_i)}-\frac{Y_i(0)-\mu_0(X_i)}{1-e(X_i)} \right)\mathbf{1}\{\bar{k}(X_i)\leq \bar{\gamma}\}\\ 
		&\quad +\P(\psi (W; \hat{\mu}_{(w)}, \hat{e}, \hat{k}, \hat{\gamma})-\psi(W; \mu_{(w)}, e, \bar{k}, \bar{\gamma}))+o_P(n^{-1/2})\nonumber ,
		\label{eq:psi}
	\end{align}
	where the third equality uses Assumption~\ref{ass:cross}(a) and Lemma 2 of \citet{kennedy2020} or Assumption~\ref{ass:cross}(b), Lemma 19.24 of \citet{van2000}, and the fact that $\psi$ converges and is Donsker. Now, rearranging \eqref{eq:psi} gives
	\begin{align}
		\frac{\hat{S}_{\hat{A}}-S_{\bar{A}}}{n}=&\frac{1}{n}\sum_{i=1}^n \left( \frac{Y_i(1)-\mu_1(X_i)}{e(X_i)}Z_i-\frac{Y_i(0)-\mu_0(X_i)}{e(X_i)}(1-Z_i) \right)\\ &+\P(\psi (W; \hat{\mu}_{(w)}, \hat{e}, \hat{k}, \hat{\gamma})-\psi(W; \mu_{(w)}, e, \bar{k}, \bar{\gamma}))+o_P(n^{-1/2})\nonumber .
		\label{eq:simplified}
	\end{align}

	All that is left to do is control the error term in \eqref{eq:simplified}. We analyze the terms corresponding to the treatment group; the control group terms are analogous. Those terms are 
	\begin{align}
		&= \P\left[ \left(\hat{\mu}_1(X_i)+\frac{Y_i(1)-\hat{\mu}_1(X_i)}{\hat{e}(X_i)}Z_i\right)\mathbf{1}\{\hat{k}(X_i)\leq \hat{\gamma}\}-\left( \mu_1(X_i)+\frac{Y_i(1)-\mu_1(X_i)}{e(X_i)}Z_i \right)\mathbf{1}\{\bar{k}(X_i)\leq \bar{\gamma}\}\right],\\
		&= \P\left[ \left( \frac{\hat{\mu}_1(X_i)\hat{e}(X_i)+(Y_i(1)-\hat{\mu}_1(X_i))e(X_i)}{\hat{e}(X_i)} \right)\mathbf{1}\{\hat{k}(X_i)\leq \hat{\gamma}\} -\mu_1(X_i)\mathbf{1}\{\bar{k}(X_i)\leq \bar{\gamma}\}\right],\\
		&= \P\left[ \frac{\hat{\mu}_1\hat{e}\mathbf{1}\{\hat{k}\leq \hat{\gamma}\}+\mu_1e\mathbf{1}\{\hat{k}\leq \hat{\gamma}\}-\hat{\mu}_1e\mathbf{1}\{\hat{k}\leq \hat{\gamma}\}-\mu_1\hat{e}\mathbf{1}\{\bar{k}\leq \bar{\gamma}\}}{\hat{e}} \right],
		\label{eq:r2}
	\end{align}
	where we omit the covariates $X_i$ in \eqref{eq:r2} for brevity. 

	Then, we can add and subtract $\mu_1\hat{e}\mathbf{1}\{\hat{k}\leq \hat{\gamma}\}$ from the numerator of \eqref{eq:r2} and factor it as
	\begin{align}
		& =\hat{\mu}_1\hat{e}\mathbf{1}\{\hat{k}\leq \hat{\gamma}\}+\mu_1e\mathbf{1}\{\hat{k}\leq \hat{\gamma}\}-\hat{\mu}_1e\mathbf{1}\{\hat{k}\leq \hat{\gamma}\}-\mu_1\hat{e}\mathbf{1}\{\hat{k}\leq \hat{\gamma}\} +\mu_1\hat{e}\mathbf{1}\{\hat{k}\leq \hat{\gamma}\}-\mu_1\hat{e}\mathbf{1}\{\bar{k}\leq \bar{\gamma}\},\\
		&= (\hat{\mu}_1-\mu_1)(\hat{e}-e)\mathbf{1}\{\hat{k}\leq \hat{\gamma}\}+\mu_1\hat{e}(\mathbf{1}\{\hat{k}\leq \hat{\gamma}\}-\mathbf{1}\{\bar{k}\leq \bar{\gamma}\}).
		\label{}
	\end{align}
	Finally, substituting this back into \eqref{eq:r2}, we find that \eqref{eq:r2} is 
	\begin{align}
		&= \P\left[ \frac{(\hat{\mu}_1-\mu_1)(\hat{e}-e)\mathbf{1}\{\hat{k}\leq \hat{\gamma}\}+\mu_1\hat{e}(\mathbf{1}\{\hat{k}\leq \hat{\gamma}\}-\mathbf{1}\{\bar{k}\leq \bar{\gamma}\})}{\hat{e}}\right],\\
		&=\P\left[ \frac{(\hat{\mu}_1-\mu_1)(\hat{e}-e)}{\hat{e}}\mathbf{1}\{\hat{k}\leq \hat{\gamma}\} \right]+\P\left[ \mu_1(\mathbf{1}\{\hat{k}\leq \hat{\gamma}\}-\mathbf{1}\{\bar{k}\leq \bar{\gamma}\}) \right],\\
		&=\P\left[ \mu_1(\mathbf{1}\{\hat{k}\leq \hat{\gamma}\}-\mathbf{1}\{\bar{k}\leq \bar{\gamma}\}) \right]+o_P(n^{-1/2}),
		\label{eq:split}
	\end{align}
	where the third equality repeats the bounding of \eqref{eq:norm_product}. 
	Now, \eqref{eq:split} tracks the contribution of treatment terms to \eqref{eq:simplified}; the contribution of the control terms is analogous, so we conclude that \eqref{eq:simplified} is, up to an $o_P(n^{-1/2})$ error term, $$\frac{1}{n}\sum_{i=1}^n \left( \frac{Y_i(1)-\mu_1(X_i)}{e(X_i)}+\frac{Y_i(0)-\mu_0(X_i)}{1-e(X_i)}(1-Z_i) \right)\mathbf{1}\{\bar{k}(X_i)\leq \bar{\gamma}\}\\ +\P\left[ \tau(X_i)(\mathbf{1}\{\hat{k}(X_i)\leq \hat{\gamma}\}-\mathbf{1}\{\bar{k}(X_i)\leq \bar{\gamma}\}\right],$$ completing the proof.
\end{proof}

For the denominators $n_{\hat{A}}$ and $n_{\bar{A}}$, we recall \eqref{eq:denom_diff}, which shows that 
\begin{align}
	\frac{n_{\hat{A}}-n_{\bar{A}}}{n}&=\P_n\mathbf{1}\{\hat{k}(X_i)\leq \hat{\gamma}\}-\P_n\mathbf{1}\{\bar{k}(X_i)\leq \bar{\gamma}\},\\
&= \P(\mathbf{1}\{\hat{k}(X_i)\leq \hat{\gamma}\}-\mathbf{1}\{\bar{k}(X_i)\leq \bar{\gamma}\})+o_P(n^{-1/2}).
\label{eq:denom_diff_2}
\end{align}

Finally, combining Lemma~\ref{lem:numerator} and \eqref{eq:denom_diff_2} gives the theorem. 
\begin{proof}[Proof of Theorem~\ref{thm:limit_trimmed_clt}]
	We apply a first-order Taylor expansion to the function $f(x,y)=x/y$ around $(S_{\bar{A}}, n_{\bar{A}})$ to write $\hat{S}_{\hat{A}}/n_{\hat{A}}$ as 
	\begin{align}
		\frac{\hat{S}_{\hat{A}}}{n_{\hat{A}}}&=\frac{S_{\bar{A}}}{n_{\bar{A}}}+\frac{(\hat{S}_{\hat{A}}-S_{\bar{A}})/n}{n_{\bar{A}}/n}-\frac{S_{\bar{A}}/n}{(n_{\bar{A}}/n)^2}\cdot \frac{n_{\hat{A}}-n_{\bar{A}}}{n}+o_P\left(\|[\hat{S}_{\hat{A}}/n\;\; n_{\hat{A}}/n] - [S_{\bar{A}}/n\;\; n_{\bar{A}}/n]\|_2\right),\\
		&=\frac{S_{\bar{A}}}{n_{\bar{A}}}+\frac{(\hat{S}_{\hat{A}}-S_{\bar{A}})/n}{n_{\bar{A}}/n}-\frac{S_{\bar{A}}/n}{(n_{\bar{A}}/n)^2}\cdot \frac{n_{\hat{A}}-n_{\bar{A}}}{n}+o_P(n^{-1/2}),
	\end{align}
	because $[\hat{S}_{\hat{A}}/n\;\; n_{\hat{A}}/n] - [S_{\bar{A}}/n\;\; n_{\bar{A}}/n]=O_p(n^{-1/2})$ by assumption. Rearranging, we find
	\begin{align}
		\hat{\tau}_{\hat{A}}-\tau_{\bar{A}}&= \frac{(\hat{S}_{\hat{A}}-S_{\bar{A}})/n}{n_{\bar{A}}/n}-\frac{S_{\bar{A}}/n}{(n_{\bar{A}}/n)^2}\cdot \frac{n_{\hat{A}}-n_{\bar{A}}}{n}+o_P(n^{-1/2}),\\
		&= \frac{1}{\P(\bar{k}(X_i)\leq \bar{\gamma})+o_P(1)}\cdot \frac{\hat{S}_{\hat{A}}-S_{\bar{A}}}{n}-\frac{\E[\tau(X_i)\mathbf{1}\{\bar{k}(X_i)\leq \bar{\gamma}\}]+o_P(1)}{(\P(\bar{k}(X_i)\leq \bar{\gamma})+o_P(1))^2}\cdot \frac{n_{\hat{A}}-n_{\bar{A}}}{n}+o_P(n^{-1/2}),\\
	&= \frac{1}{\P(\bar{k}(X_i)\leq \bar{\gamma})}\cdot \frac{\hat{S}_{\hat{A}}-S_{\bar{A}}}{n}-\frac{\E[\tau(X_i)\mathbf{1}\{\bar{k}(X_i)\leq \bar{\gamma}\}]}{\P(\bar{k}(X_i)\leq \bar{\gamma})^2}\cdot \frac{n_{\hat{A}}-n_{\bar{A}}}{n}+o_P(n^{-1/2}),\\
	&= \frac{1}{\P(\bar{k}(X_i)\leq \bar{\gamma})}\cdot \frac{\hat{S}_{\hat{A}}-S_{\bar{A}}}{n}-\frac{\E[\tau(X_i)\mid \bar{k}(X_i)\leq \bar{\gamma}]}{\P(\bar{k}(X_i)\leq \bar{\gamma})}\cdot \frac{n_{\hat{A}}-n_{\bar{A}}}{n}+o_P(n^{-1/2}),
	\label{eq:taylor_diff}
	\end{align}
	where the second equality is the law of large numbers, the third follows from bounding the contribution of the $o_P(1)$ terms, and the fourth is the definition of conditional expectation. Then, substituting Lemma~\ref{lem:numerator} and \eqref{eq:denom_diff_2} into \eqref{eq:taylor_diff} and simplifying gives \eqref{eq:thm2_statement}, as desired.
\end{proof}

\section{Additional experimental results}
\label{app:acic}

\subsection{Visualization of variance minimization on NHANES data} 

In this section, we clarify a subtlety regarding Figure~\ref{fig:nhanes_seq}. Specifically, one might wonder why our chosen cut-off $\hat{\gamma}$ does not minimize the standard errors shown in the bottom panel of Figure~\ref{fig:nhanes_seq}. The reason for this is that $\hat{\gamma}$ is chosen to minimize the objective defined in \eqref{eq:gamma_objective}, not to minimize the actual estimated standard error on the data (that is, the sample variance of the AIPW scores). We plot these two quantities as a function of $\hat{\gamma}$ in Figure~\ref{fig:nhanes_obj} for the case of heteroscedastic trimming to demonstrate the difference between the two. This distinction is in fact crucial, because if we chose $\hat{\gamma}$ to minimize the actual standard errors, we would be using the responses $Y_i$ directly, and our inferences would be invalid. In choosing $\hat{\gamma}$ to minimize the objective in \eqref{eq:gamma_objective}, the responses $Y_i$ are used only indirectly in the fitting of $\hat{k}$, and by preventing $\hat{k}$ from overfitting as we do in Assumption~\ref{ass:dml}, we can obtain valid inferences.

\begin{figure}[h]
	\centering
	\includegraphics{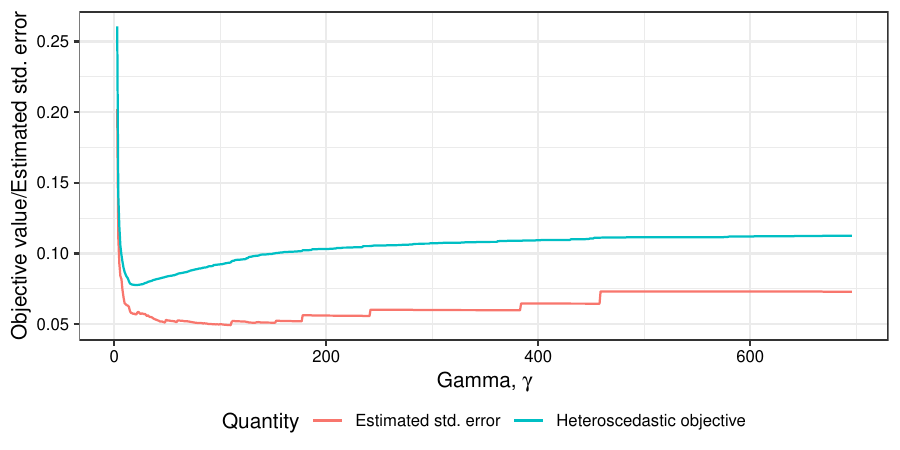}
	\caption{Comparison of the heteroscedastic objective function in \eqref{eq:gamma_objective} that $\hat{\gamma}$ is chosen to minimize (blue) and the actual estimated standard error on the data (red). We see that the minimizer $\hat{\gamma}$ of the objective function has a standard error that is close to the true minimum of the estimated standard errors, but that it trims slightly more units than may actually be necessary.}
	\label{fig:nhanes_obj}
\end{figure}

\subsection{Further experiments on ACIC data}
In this section, we repeat our experiments on the American Causal Inference Challenge data in Section~\ref{subsec:acic} on additional datasets provided by the organizers of the challenge. The full contest consists of 3400 datasets from 17 different data-generating processes, where each data-generating process has a different degree of confounding, propensity score model, and response model \citep{acic_challenge}. Below, we present results on the first 8 of these datasets to highlight that our method has consistently reasonable behavior, even when it does not enable new significant discoveries as in Section~\ref{subsec:acic}.

\begin{figure}[h]
	\centering
	\begin{tabular}{cc}
		\includegraphics[width=0.45\textwidth]{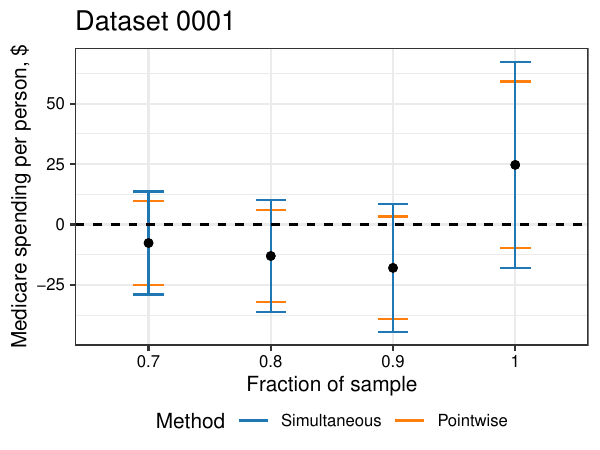}&\includegraphics[width=0.45\textwidth]{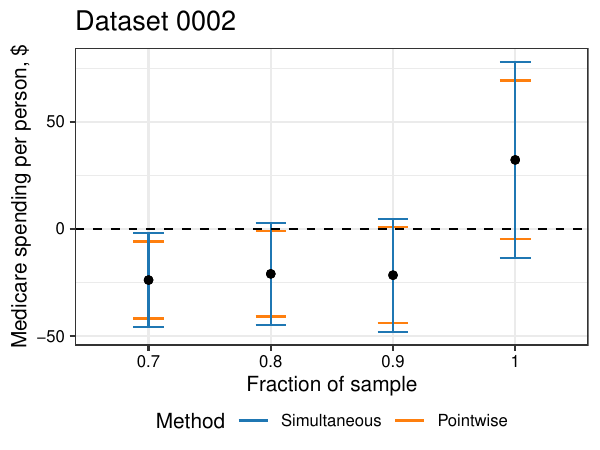}\\
		\includegraphics[width=0.45\textwidth]{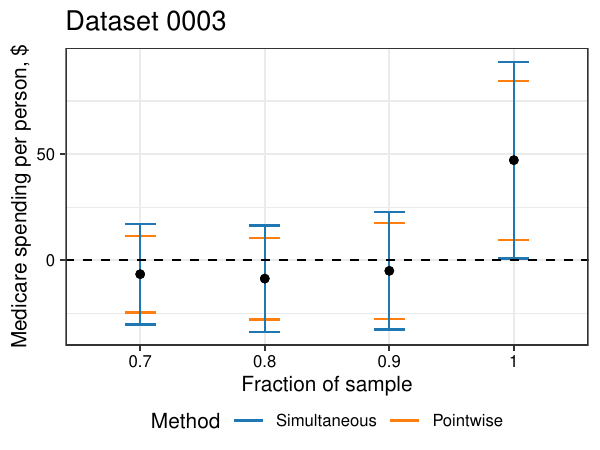}&\includegraphics[width=0.45\textwidth]{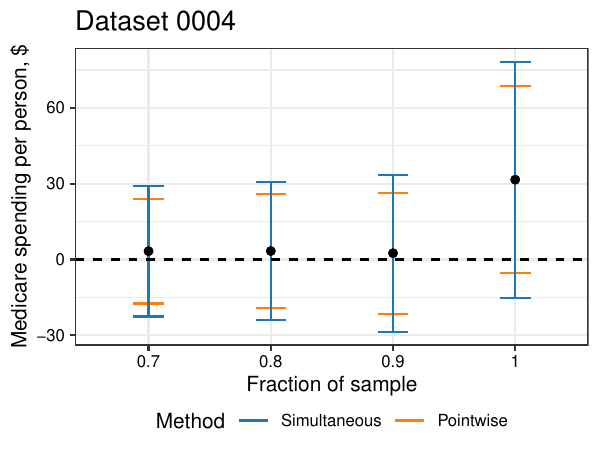}\\
		\includegraphics[width=0.45\textwidth]{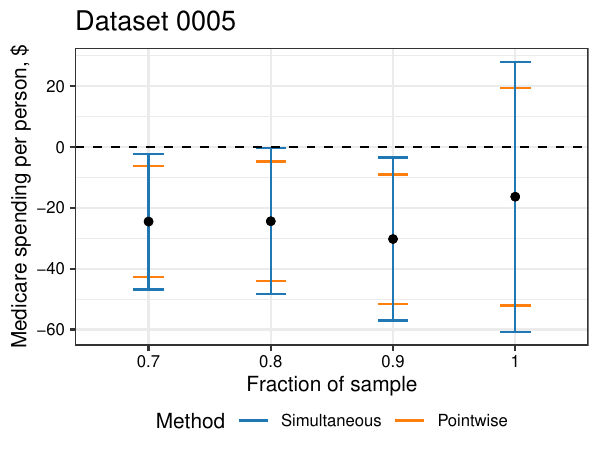}&\includegraphics[width=0.45\textwidth]{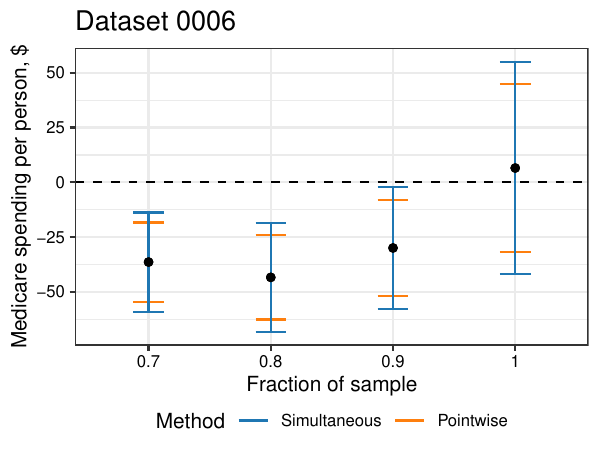}\\
		\includegraphics[width=0.45\textwidth]{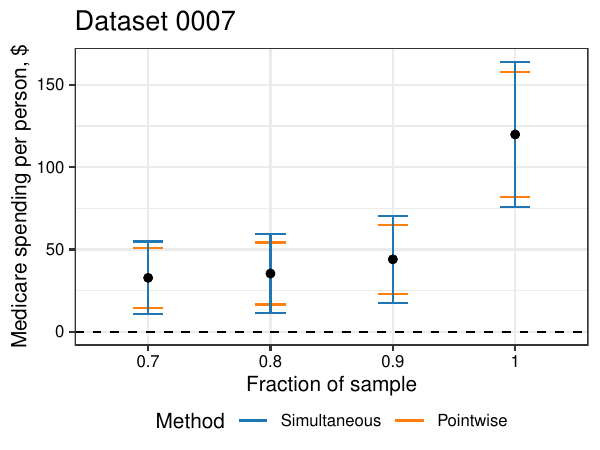}&\includegraphics[width=0.45\textwidth]{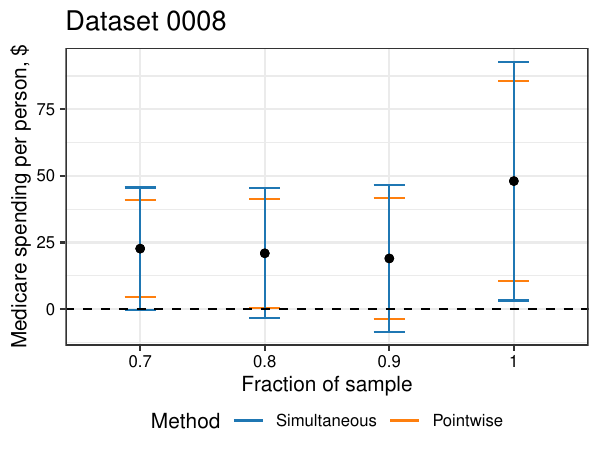}\\
	\end{tabular}
	\caption{Results of repeating the experiments of Section~\ref{subsec:acic} on seven additional datasets (the results of Section~\ref{subsec:acic} correspond to data set six here). We see across all datasets that the simultaneous intervals are only slightly wider than the marginal intervals.}
	\label{fig:acic_replications}
\end{figure}

\end{document}